\documentclass[]{amsart}
% please leave the next two lines unchanged
%\stacsheading{year}{numbers}{city}
% \firstpageno{1}

%% read in additional TeX-packages or personal macros here:
%% e.g. \usepackage{xy}
\usepackage{graphicx}

%% define non-standard environments here, for example
%\theoremstyle{plain}\newtheorem{satz}[thm]{Satz}
%\theoremstyle{definition}\newtheorem{crucial}[thm]{Crucial Definition}
%\theoremstyle{fact}\newtheorem{fact}[thm]{Fact}
%\def\eg{{\em e.g.}}
%\def\cf{{\em cf.}}

%\newtheorem{thm}{Theorem}
\newtheorem{thm}{Theorem}

\newtheorem{corollary}{Corollary}
\newtheorem{lem}{Lemma}
\newtheorem{lemma}{Lemma}

\newtheorem{fact}{Fact}

% the following environments keep the roman font:
\theoremstyle{definition}

\newtheorem{remark}{Remark}

\newtheorem{example}{Example}

\newtheorem{definition}{Definition}

\setlength{\textheight}{21cm}
\setlength{\oddsidemargin}{0.5cm}
\setlength{\evensidemargin}{0.5cm}
\setlength{\textwidth}{15cm}

%% due to the dependence on amsart.cls, \begin{document} has to occur
%% BEFORE the title and author information:
\begin{document}

% \title[short title]{title}
\title[Second moment method for a family of boolean CSP]{Second moment method for a family of boolean CSP}

% \author[ref]{Short author}{Author}
\author[Y. Boufkhad]{Yacine Boufkhad}
% \address[ref]{Address of authors with ref as reference}
\address[Y. Boufkhad]{  LIAFA, Universit{\'e} Paris Diderot, CNRS UMR 7089 
  \newline Paris, France}  %required
\email{ Yacine.Boufkhad@liafa.jussieu.fr}  %optional
%\email[C. Uthot]{best.chos.univ@labo.ti} %optional
%\urladdr{http://www.institute.sw}  %optional

\author[O. Dubois]{Olivier Dubois}
\address[O. Dubois]{LIP6, CNRS UMR7606, Universit{\'e} Pierre et Marie Curie	%optional
  \newline Paris, France} 
\email{ Olivier.Dubois@lip6.fr}  %optional
%\urladdr{http://www.organization.ws}  %optional

%\author[lab1,lab2]{C. Uthot}{Claire Uthot}

%\thanks{Thanks: \url{stacs.cls} and this sample file are derived from
%the files \url{lmcs.cls} and  \url{lmcs-smp.cls}.} %optional

%% mandatory lists of keywords and classifications:
\keywords{Constraint Satisfaction Problems, Phase transitions, Second moment method}
%\subjclass{Random structures}
% \titlecomment{OPTIONAL comment concerning the title, \eg, if a variant
% or an extended abstract of the paper has appeared elsewehere}
%%%%%%%%%%%%%%%%%%%%%%%%%%%%%%%%%%%%%%%%%%%%%%%%%%%%%%%%%%%%%%%%%%%%%%%%%%%

%% the abstract has to PRECEDE the command \maketitle:
%% be sure not to issue the \maketitle command twice!
\setlength{\parindent}{0in}

\begin{abstract}
The estimation of phase transitions in random boolean Constraint Satisfaction
Problems (CSP) is based on two fundamental tools: the first and second
moment methods. While the first moment method on the number of solutions
permits to compute upper bounds on any boolean CSP, the second moment
method used for computing lower bounds proves to be more tricky and
in most cases gives only the trivial lower bound 0. In this paper,
we define a subclass of boolean CSP covering the monotone versions
of many known NP-Complete boolean CSPs. We give a method for computing
non trivial lower bounds for any member of this subclass. This is
achieved thanks to an application of the second moment method to some
selected solutions called characteristic solutions that depend on
the boolean CSP considered. We apply this method with a finer analysis to
establish that the threshold $r_{k}$ (ratio : \#constrains/\#variables) of monotone 1-in-k-SAT is $\log k/k\leq r_{k}\leq\log^{2}k/k$.
\end{abstract}
\maketitle

\section*{Introduction}

The empirical evidence has shown that random instances of boolean
Constraint Satisfaction Problems $CSP$ exhibit a phase transition
i.e. a sudden change from SAT to UNSAT when the number of constraints
increases: that is there exists a critical value $r^{*}$ of the ratio
$r$ number of constraints to number of variables such that random
instances are w.h.p. satisfiable if $r<r^{*}$ and w.h.p. unsatisfiable
if $r>r^{*}.$ $r^{*}$ is called the threshold value of the transition.
The sharpness of the threshold has been addressed in a series of works
\cite{Sharpthresholdsforgraphpropertiesandthek-SATproblem::1999::Friedgut,DBLP:conf/stoc/Molloy02a,CombinatorialsharpnesscriterionandphasetransitionclassificationforrandomCSPs.::2004::CreignouDaude,DBLP:journals/dm/CreignouD09}.

Computing the threshold associated to a $CSP$ is at present out of
reach apart from some exceptions \cite{MickGetsSometheOddsAreonHisSide::1992::ChvatalReed,AThresholdforUnsatisfiability.::1996::Goerdt,DelaVega,The3-XORSATthreshold::2002::DuboisMandler}
for polynomial subclasses. Since this cannot be carried out, upper
and lower bounds of $r^{*}$ are computed. These bounds are almost
all obtained by different applications of the probabilistic method
tools: the first and second moment methods and for some of them through
anlaysis of algorithms\cite{Typicalrandom3-SATformulaeandthesatisfiabilitythreshold::2003::DuboisBoufkhadMandler,Typicalrandom3-SATformulaeandthesatisfiabilitythreshold::2000::DuboisBoufkhadMandler,DBLP:journals/tcs/DiazKMP09,Theprobabilisticanalysisofagreedysatisfiabilityalgorithm.::2006::KaporisKirousisLalas,DBLP:conf/sat/BoufkhadH10}
for $3$-SAT. 
%A significant breakthourgh is made by Achlioptas and
%Peres\cite{TheThresholdforRandomk-SATis2mboxkln2-Ok::2004::AchlioptasPeres}
%who established a lower bound that is asymptotically tight for $k$-SAT. 

While the first moment method on the number of solutions permits to
obtain an upper bound of the location of the threshold for any boolean
$CSP$, the second moment method on the number of solutions fails at any ratio to estimate the probability of satisfiability. In \cite{TheThresholdforRandomk-SATis2mboxkln2-Ok::2004::AchlioptasPeres}, an original method is presented to overcome this problem in the case of $k$-SAT for which the direct calculus also fails. 
%In the same paper, an original
%method to overcome that drawback is presented. It consists in focusing on some
%particular solutions that have the nice property of having small variance.

We define a subclass of $CSP$ and a method that allow
to bound the phase transition from both sides for this subclass. The
latter is characterized by constraints having the property of being
closed under permutations. It includes the monotone versions of many
well known problems like : 1-in-kSAT, NAE-k-SAT... and then it includes
many NP-Complete boolean $CSP$s. 

Roughly speaking, we show how the second method can be made ``to
work'' for every problem in this class. More precisely, we parameterize the valuations by their number of variables having the value $1$ and we show that  there exist precise values for this parameter depending on every problem for which
the second moment method gives a non trivial lower bound, the corresponding solutions are called \emph{characteristic solutions}. 
We prove that the bound given by this method is at least some well defined value for any problem in the sublclass. However, the generality of this value is obtained at the cost of some weakness.  
Better bounds can be computed using the same scheme through a finer analysis on a case by case basis. To illustrate this, we do the full analysis  for positive $1$-in-$k$-SAT
to derive the asymptotically optimal lower bound with respect to our method that is $\log k/k$. To
show that this lower bound is tight, we establish using the first
moment method an upper bound of $\log^{2}k/k$ for $k\geq7$.

%The method described in this paper is general for all the problems of this subclass. It is based on a method for selecting solutions that make the second moment method to succeed. The selected solutions are different for each problem of the subclass. 

\section{\label{sec:Definitions-and-outline}Basic definitions and main results}

Given a set $X$ of $n$ boolean variables, a valuation $\sigma$
is a mapping $X\rightarrow\{0,1\}$ that assigns to any variable $x\in X$
the value $0$ or $1$. $k$ being an integer, a relation $R$ of
arity $k$ is a subset of $\{0,1\}^{k}$. A relation is said to be
trivial if $\left(0,...,0\right)$ or $\left(1,...,1\right)$ is an
element of $R$. We consider throughout  the paper only non trivial
relations. A constraint defined from a relation $R$ of arity $k$,
  is a tuple of $k$ boolean variables, denoted $R\left(x_{1},...,x_{k}\right)$,
which is said to be satisfied under some valuation $\sigma$ iff
$\left(\sigma\left(x_{1}\right),...,\sigma\left(x_{k}\right)\right)\in R$,
otherwise it is said unsatisfied. Given a set of relations $S$,
an instance of boolean CSP with respect to $S$, denoted by $CSP\left(S\right)$,
is a conjunction set of constraints $R\left(x_{1},...,x_{k}\right)$
where $R\in S$. An instance $CSP\left(S\right)$ is satisfied iff
every constraint is satisfied. 

In this paper, we consider a subclass of $CSP\left(S\right)$s defined as follows. A relation $R$ is said to
be invariant by permutation, iff any permutation of the coordinates
of a tuple $t\in R$ is also in $R.$ Such a relation is denoted by
$R_{inv}$. The invariance property implies that for every tuple $t\in R_{inv}$,
all tuples having the same number of coordinates equal to 1 or (0s)
as $t$ must be also in $R_{inv}.$ This defines an equivalence relation,
two elements $t$ and $t'$ of $R_{inv}$ belonging to the same equivalence
class iff they have the same number of coordinates equal to $1$s
(or 0s). Thus the equivalence classes partition $R_{inv}$ into subsets
each associated to an integer $i$ equal to the number of $1$s of
the elements of the class. 

In this paper, we will only consider boolean $CSP$s with respect
to a single non trivial invariant under permutation relation denoted
$CSP(\{R_{inv}\})$ . In order to designate more explicitly a $CSP(R_{inv})$
 we will denote it in an equivalent manner by $CSP\left(I_{k}\right)$,
where $I_{k}$ is the subset of integers in $\{1,...,k-1\}$ associated
to all equivalence classes. Thus an instance
$CSP(I_{k})$ is satisfiable with respect to $I_{k}$ iff there exits
a valuation $\sigma$ such that the number of $1$s in every constraint
of $CSP(I_{k})$\footnote{Alternatively, this class can be seen as hypergraph bi-coloring problem where the number of vertices allowed to have a certain color in some edge are taken only in $I_k$. } is an element of $I_{k}$. 
\begin{example}
$k=4$ and $R=\{1000,0100,0010,0001,0111,1011,1101,1110\}$. $R$
is invariant by permutation. The set of integers associated to $R$ is
$I_{4}=\{1,3\}$. A constraint $(x_{i_{1}},x_{i_{2}},x_{i_{3}},x_{i_{4}})$
of an instance $CSP(R)$ is satisfied iff exactly one or three of
the four variables of the constraint has the value 1. 

The $CSPs$ of the class defined above are NP-complete for any relation
of arity greater or equal to 3 according to the Schaefer classification.
They include two well known problems of this classification that
are positive $1$-in-$k$-SAT ($I_{k}=\{1\}$ according to the above
definition) and positive not-all-equal-$k$-SAT ($I_{k}=\{1,...,k-1\}$). 
\end{example}
The random version of a $CSP(I_{k})$ is as follows. Given a relation
$R_{inv}$, $I_{k}$ the set of integers associated with $R_{inv}$,
a random $CSP(I_{k})$ instance with $m$ constraints over $n$ boolean
variables is formed by drawing uniformly, independently and with replacement
$m$ tuples of $k$ variables over the set of $n$ variables. Such
a random $CSP(I_{k})$ instance is denoted by $I_{k}(m,n).$ This
defines a probability space denoted by $\Omega(I_{k},m,n)$ in which
instances $I_{k}(m,n)$ are equiprobable. 
\begin{definition}
A $p$-valuation for some natural integer $0\leq p\leq n$ is a valuation
such that $|\{x_{i}|\sigma\left(x_{i}\right)=1\}|=p$. 
\end{definition}
Let $\delta\in[0,1]$, for the sake of simplicity we will denote whenever
it is non ambiguous a $\lfloor\delta n\rfloor$-valuation by $\delta$-valuation.
A $\delta$-valuation that is a solution of an instance $I_{k}(m,n)$
is said to be a $\delta$-solution. Let $X_{\delta}$ be the random
variable associating to each $I_{k}(m,n)$ the number of its $\delta$-solutions. 
\begin{thm}
\label{lem:secondmoment}For any $CSP\left(I_{k}\right)$, there exist
a $r^*_{I_{k}}>0$ and $0<\delta<1$ such that for all $r<r^*_{I_{k}}$,
\textup{$lim_{n\rightarrow\infty}\frac{{\bf E}[X_{\delta}]^{2}}{{\bf E}[X_{\delta}^{2}]}>0$.}
\end{thm}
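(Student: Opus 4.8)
The plan is to carry out the classical second moment argument for counting $\delta$-solutions, choosing $\delta$ carefully as a function of $I_k$. First I would compute the first moment. By linearity of expectation, ${\bf E}[X_\delta]=\binom{n}{\lfloor\delta n\rfloor}\,p(\delta)^m$, where $p(\delta)$ is the probability that a single random constraint is satisfied by a fixed $\delta$-valuation, namely $p(\delta)=\sum_{i\in I_k}\binom{k}{i}\delta^i(1-\delta)^{k-i}$ (up to lower-order terms coming from $\lfloor\delta n\rfloor$ versus $\delta n$). With $m=rn$ this gives $\mathbf{E}[X_\delta]=\exp\bigl(n(H(\delta)+r\log p(\delta))+o(n)\bigr)$ where $H$ is the binary entropy, so $\mathbf{E}[X_\delta]$ is exponentially large precisely when $r<H(\delta)/\bigl(-\log p(\delta)\bigr)=:\rho(\delta)$; define $r^*_{I_k}$ to be (at most) the supremum of $\rho(\delta)$ over admissible $\delta$, attained at some $0<\delta<1$ since $p(\delta)>0$ strictly inside $(0,1)$ for a non-trivial relation.

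Next I would expand the second moment. Writing $X_\delta=\sum_\sigma \mathbf{1}[\sigma\text{ is a solution}]$ over $\delta$-valuations $\sigma$, we get $\mathbf{E}[X_\delta^2]=\sum_{\sigma,\tau}\Pr[\sigma,\tau\text{ both solutions}]$, and by independence of the $m$ constraints this is $\sum_{\sigma,\tau} q(\sigma,\tau)^m$, where $q(\sigma,\tau)$ is the probability a single random constraint is simultaneously satisfied by $\sigma$ and $\tau$. By symmetry $q$ depends only on the overlap; parametrize by $z=|\{x:\sigma(x)=\tau(x)=1\}|/n$, so that the pair type is described by $(\delta,z)$ with $\max(0,2\delta-1)\le z\le \delta$. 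The number of pairs with overlap $z$ is $\binom{n}{\lfloor\delta n\rfloor}$ times a multinomial factor $\exp\bigl(n\,G_\delta(z)+o(n)\bigr)$, and $q=q(z)$ is an explicit polynomial: it counts $k$-tuples whose number of $1$'s under $\sigma$ and under $\tau$ both lie in $I_k$, which one writes as a sum over $(a,b,c)$ = (#positions with $\sigma=\tau=1$, #with $\sigma=1,\tau=0$, #with $\sigma=0,\tau=1$) of the corresponding probabilities. Thus
\[
\frac{\mathbf{E}[X_\delta^2]}{\mathbf{E}[X_\delta]^2}=\sum_z \exp\bigl(n\,\Phi_\delta(z)+o(n)\bigr),\qquad \Phi_\delta(z):=G_\delta(z)-H(\delta)+r\bigl(\log q(z)-2\log p(\delta)\bigr).
\]
The ratio stays bounded (indeed tends to a positive constant, after controlling the polynomial prefactor and the local fluctuations around the maximizer) iff $\Phi_\delta(z)\le 0$ for all admissible $z$, with the maximum attained only at the "independent" overlap $z=\delta^2$, where $\Phi_\delta(\delta^2)=0$ automatically.

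The crux — and the main obstacle — is therefore to show that for a suitable choice of $\delta$ the function $z\mapsto \Phi_\delta(z)$ has its global maximum on $[\max(0,2\delta-1),\delta]$ at the interior point $z=\delta^2$ and nowhere else, for all $r$ up to some positive $r^*_{I_k}$. One checks that $z=\delta^2$ is always a stationary point of the entropy part $G_\delta(z)-H(\delta)$ and, crucially, that $q(\delta^2)=p(\delta)^2$ when $\delta$ is chosen so that $\sigma$ and $\tau$ "decorrelate" at the level of a single constraint — this is exactly the role of the \emph{characteristic} value of $\delta$: it is selected (depending on $I_k$) so that $q'(\delta^2)/q(\delta^2)$ matches $\bigl(\log p(\delta)^2\bigr)$-derivative, making $z=\delta^2$ a genuine critical point of $\Phi_\delta$ and not merely of its entropy part. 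For small $r$ the entropy term dominates and forces concavity near $z=\delta^2$ and negativity at the endpoints $z=\max(0,2\delta-1)$ and $z=\delta$ (where $G_\delta\to-\infty$ like $-H(\delta)$ but $\log q$ stays finite, so $\Phi_\delta\to-\infty$); one then takes $r^*_{I_k}$ small enough that this picture persists uniformly, i.e. $\sup_{z\ne\delta^2}\bigl(G_\delta(z)-H(\delta)\bigr)/\bigl(2\log p(\delta)-\log q(z)\bigr)>0$, and $r^*_{I_k}$ is the infimum of that quantity intersected with $\rho(\delta)$. Finally I would invoke the standard Laplace/local-central-limit cleanup: near $z=\delta^2$ the sum over $O(\sqrt n)$ values of $z$ of $\exp(n\Phi_\delta)$ contributes a bounded factor, the polynomial prefactors contribute only polynomially, and the exponentially-suppressed tail is negligible, yielding $\liminf_n \mathbf{E}[X_\delta]^2/\mathbf{E}[X_\delta^2]>0$ as required.
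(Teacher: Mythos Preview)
Your plan follows the same architecture as the paper: parametrize pairs of $\delta$-valuations by their overlap, identify the ``independence'' overlap $z=\delta^2$ (equivalently the paper's $\mu=1-\delta$) as the point where the ratio $\mathbf{E}[X_\delta]^2/\mathbf{E}[X_\delta^2]$ must peak, and then argue that for small enough $r$ this is indeed the unique global maximizer so that a Laplace estimate finishes the job. This is correct in outline and is exactly the paper's strategy.

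There is, however, one genuine gap and one muddle. First, the identity $q(\delta^2)=p(\delta)^2$ holds for \emph{every} $\delta$, not just a special one: at overlap $z=\delta^2$ the two valuations are effectively independent, and Vandermonde collapses the double sum to $g_{I_k}(\delta)^2$ regardless of $\delta$. What does require a special $\delta$ is the \emph{stationarity} of $\log q$ at $z=\delta^2$, i.e.\ $q'(\delta^2)=0$; without it the entropy term has a quadratic zero at $z=\delta^2$ while $\log q - 2\log p$ has a linear one, so your infimum ratio collapses to $0$ and you get only $r^*_{I_k}=0$. You state the needed condition but never say which $\delta$ satisfies it or why one exists. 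The paper's answer is concrete: take $\delta\in\Delta_{I_k}$, any local maximizer of $p(\delta)=g_{I_k}(\delta)$ on $(0,1)$ (which exists because $g_{I_k}$ vanishes at the endpoints and is positive inside), and then a direct computation (their Lemma~\ref{lem:PTstat} and equation~(\ref{giketg})) gives $G_{I_k,\delta}'(1-\delta)=-\delta\,(g_{I_k}'(\delta))^2/\bigl(k(1-\delta)^2\bigr)$, which vanishes precisely when $g_{I_k}'(\delta)=0$. This identification and the supporting calculation are the heart of the argument and should be made explicit in your write-up.

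Second, your mechanism for extracting a positive $r^*_{I_k}$ is stated as an infimum of ratios but not shown to be strictly positive; near $z=\delta^2$ both numerator and denominator vanish, so you need a second-order comparison. The paper handles this more cleanly by proving that $\log\Gamma_{I_k,\delta,r}(\mu)$ is globally concave on the whole interval for all $r$ below an explicit $r^*_{I_k}=\rho/\nu$ (their Lemma~\ref{lem:maxglobal}): the entropy piece has strictly negative second derivative bounded above by $-\rho<0$, while $(\log G_{I_k,\delta})''$ is bounded above by some finite $\nu>0$, so $r<\rho/\nu$ forces concavity and hence the unique stationary point $\mu=1-\delta$ is the global maximum. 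This replaces your endpoint/infimum discussion with a single clean inequality.
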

Roughly speaking, the preceding Theorem states that for any problem
in $CSP\left(I_{k}\right)$, there exists a $\delta$ that makes the
second moment to succeed in computing a lower bound. Combined with
the inequality of Cauchy-Schwartz: 

\[
{\bf P}\left(X_{\delta}>0\right)\geq\frac{{\bf E}[X_{\delta}]^{2}}{{\bf E}[X_{\delta}^{2}]}\]

\noindent and the sharpness of the threshold of the problems in this class \cite{EveryMonotoneGraphPropertyhasaSharpThreshold::1996::FriedgutKalai}
\cite{Generalizedsatisfiabilityproblems:minimalelementsandphasetransitions.::2003::CreignouDaude},
we  have the following consequence of Theorem \ref{lem:secondmoment}.

\begin{corollary}
\label{thm:borneinf}For any $CSP\left(I_{k}\right)$, there exists
a real $r^*_{I_{k}}>0$ such that for $r<r^*_{I_{k}}$, 

${\displaystyle lim_{n\rightarrow\infty}}Pr\left(I_{k}\left(n,rn\right)\mbox{ is satisfiable }\right)=1$.
\end{corollary}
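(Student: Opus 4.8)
\emph{Proof strategy for Corollary~\ref{thm:borneinf}.}
The plan is to feed the output of Theorem~\ref{lem:secondmoment} into the Paley--Zygmund inequality and then to promote the resulting ``bounded away from $0$'' statement to a genuine $0$--$1$ law by invoking the sharpness of the threshold for this family. Concretely, fix the pair $(r^*_{I_k},\delta)$ produced by Theorem~\ref{lem:secondmoment}, so that $\liminf_{n\to\infty}{\bf E}[X_\delta]^2/{\bf E}[X_\delta^2]>0$ for every $r<r^*_{I_k}$. Since $X_\delta$ is a non-negative integer-valued random variable, Cauchy--Schwarz gives
\[
{\bf P}(X_\delta>0)\ \ge\ \frac{{\bf E}[X_\delta]^2}{{\bf E}[X_\delta^2]},
\]
and the event $\{X_\delta>0\}$ entails that $I_k(n,rn)$ is satisfiable; hence for each fixed $r<r^*_{I_k}$,
\[
\liminf_{n\to\infty}\Pr\big(I_k(n,rn)\text{ is satisfiable}\big)\ \ge\ \liminf_{n\to\infty}\frac{{\bf E}[X_\delta]^2}{{\bf E}[X_\delta^2]}\ >\ 0 .
\]
At this stage we only know that satisfiability survives with positive probability for every ratio below $r^*_{I_k}$, which is strictly weaker than the asserted convergence to $1$.

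The second step closes this gap. The event ``$I_k(n,m)$ is satisfiable'' is monotone decreasing in the number $m$ of sampled constraints (adding constraints can only destroy solutions), so the sharp-threshold results cited for monotone CSPs \cite{EveryMonotoneGraphPropertyhasaSharpThreshold::1996::FriedgutKalai,Generalizedsatisfiabilityproblems:minimalelementsandphasetransitions.::2003::CreignouDaude} apply: there is a threshold location $m_c(n)$, say with $\Pr(I_k(n,m_c(n))\text{ sat})=1/2$, such that for every $\varepsilon>0$ one has $\Pr(I_k(n,(1-\varepsilon)m_c(n))\text{ sat})\to 1$ and $\Pr(I_k(n,(1+\varepsilon)m_c(n))\text{ sat})\to 0$. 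Combining the lower bound of the first step with monotonicity in $m$, the satisfiability probability at any ratio $\rho<r^*_{I_k}$ cannot tend to $0$, which forces $\liminf_{n\to\infty} m_c(n)/n\ge r^*_{I_k}$. Hence for any $r<r^*_{I_k}$ one may choose $\varepsilon>0$ with $r(1+\varepsilon)<r^*_{I_k}$, so that $rn\le(1-\varepsilon')m_c(n)$ eventually for a fixed $\varepsilon'>0$; monotonicity together with the lower end of the sharp-threshold statement then yields $\Pr(I_k(n,rn)\text{ sat})\to 1$, which is the assertion of the Corollary (and it holds with the same constant $r^*_{I_k}$ as in Theorem~\ref{lem:secondmoment}).

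The algebraic content here --- the Cauchy--Schwarz step and the elementary limit manipulations around $m_c(n)$ --- is routine. The one point that genuinely has to be checked is that the cited sharpness results really do cover the whole family $CSP(I_k)$: one needs each such relation to satisfy the hypotheses of \cite{Generalizedsatisfiabilityproblems:minimalelementsandphasetransitions.::2003::CreignouDaude} (non-triviality of $R_{inv}$, monotonicity of the associated phase transition), so that the transition window is indeed of width $o(1)$ on the ratio scale. Granting this, no further argument is needed beyond Theorem~\ref{lem:secondmoment}.
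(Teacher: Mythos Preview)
Your proposal is correct and follows essentially the same route as the paper: Theorem~\ref{lem:secondmoment} plus Cauchy--Schwarz gives $\liminf_n \Pr(\text{sat})>0$, and the sharp-threshold results of Friedgut and Creignou--Daud\'e upgrade this to convergence to~$1$. The one place where the paper is slightly more explicit is the point you flag as ``genuinely to be checked'': it invokes the Creignou--Daud\'e combinatorial criterion directly, asserting that for $k\ge 3$ a $CSP(I_k)$ is neither depending on one component nor strongly dependent on a 2XOR-relation, which is exactly what is needed to apply Friedgut's theorem.
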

The Theorem \ref{lem:secondmoment} states that the second moment succeeds for some values of $\delta$ using $X_\delta$ as a random variable. However, the value of the bound $r^*_{I_k}$ mentioned in the theorem (and given in Section \ref{preuve13}) is not the optimal bound that can be derived by the method. This is the price of its generality. 
The full analysis establishing the optimal bound can somehow be done on a case by case basis. Note that the best bound that can be obtained can not exceed the smallest ratio that make $\mathbf{E}[X_{\delta}] \rightarrow 0$ when $n\rightarrow \infty$. Indeed, by Markov inequality the probability of $X_\delta>0$ in that case is $0$. The value of the best bound that can be expected for  positive 1-in-k-SAT is asymptotically $\log k/k$. This is precisely what we obtain through the full analysis of this particular problem. We obtain : 
\begin{thm}
\label{thm:1ksat}Let $1_{k}=\{1\}$. $lim_{n\rightarrow\infty}{\bf Pr}\left(1_{k}\left(n,rn\right)\mbox{ is satisfiable}\right)=1$ If $r<\log k/k$ and $k\geq3$. 
\end{thm}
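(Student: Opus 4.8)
The plan is the second moment method applied to the number $X_{\delta}$ of $\delta$-solutions, with the \emph{characteristic valuation} $\delta=\delta_{k}=1/k$ (that this precise scale is the right one will be explained below). A clause is satisfied by a fixed $p$-valuation, $p=\lfloor\delta n\rfloor$, exactly when one of its $k$ coordinates carries a true variable and the other $k-1$ carry false ones, so
\[
\mathbf{E}[X_{\delta}]=\binom{n}{p}\Bigl(\tfrac{k\,p\,(n-p)^{k-1}}{n^{k}}\Bigr)^{rn},\qquad
\tfrac1n\ln\mathbf{E}[X_{\delta}]\ \longrightarrow\ H(\delta)+r\ln\!\bigl(k\delta(1-\delta)^{k-1}\bigr),
\]
$H$ the natural-logarithm binary entropy. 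A first, routine step is to check that for $\delta=1/k$ the right-hand side is strictly positive whenever $r<\log k/k$: from $H(1/k)=\log k/k+1/k+O(k^{-2})$ and $\ln\!\bigl(1/(1-1/k)^{k-1}\bigr)=1-\tfrac1{2k}+O(k^{-2})$ one gets $H(1/k)/\ln\!\bigl(1/(1-1/k)^{k-1}\bigr)=\log k/k+1/k+o(1/k)>\log k/k$, so indeed $\mathbf{E}[X_{\delta}]\to\infty$ exponentially for every $r<\log k/k$; this also settles the Markov obstruction mentioned after Corollary~\ref{thm:borneinf}.

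The core is $\mathbf{E}[X_{\delta}^{2}]=O\bigl(\mathbf{E}[X_{\delta}]^{2}\bigr)$. Grouping ordered pairs $(\sigma,\tau)$ of $p$-valuations by the number $a$ of variables both set to $1$ (so $p-a$ are true for exactly one of them and $n-2p+a$ false for both, $0\le a\le p$ since $2p<n$ for $k\ge3$), and computing the probability $q_{2}(a)=\bigl(k\,a\,(n-2p+a)^{k-1}+k(k-1)(p-a)^{2}(n-2p+a)^{k-2}\bigr)/n^{k}$ that a random clause is satisfied by both, we obtain $\mathbf{E}[X_{\delta}^{2}]=\sum_{a}\binom{n}{a,\,p-a,\,p-a,\,n-2p+a}\,q_{2}(a)^{rn}$, which by Stirling is $\sum_{a}e^{n\phi(a/n)+o(n)}$ with
\[
\phi(\alpha)=-\alpha\ln\alpha-2(\delta-\alpha)\ln(\delta-\alpha)-\mu\ln\mu+r\ln\!\bigl(k\alpha\mu^{k-1}+k(k-1)(\delta-\alpha)^{2}\mu^{k-2}\bigr),\quad\mu:=1-2\delta+\alpha .
\]
A direct computation gives $\phi(\delta^{2})=2H(\delta)+2r\ln\!\bigl(k\delta(1-\delta)^{k-1}\bigr)$, exactly twice the first-moment exponent, so it remains to show that for $r<\log k/k$ the point $\alpha^{\ast}=\delta^{2}$ is the unique, non-degenerate global maximum of $\phi$ on $[0,\delta]$; a standard Laplace estimate then yields $\mathbf{E}[X_{\delta}^{2}]=\Theta\bigl(\mathbf{E}[X_{\delta}]^{2}\bigr)$, the prefactors at $\alpha^{\ast}$ matching those of $\mathbf{E}[X_{\delta}]^{2}$.

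The structural fact that fixes the scale is that the entropy part of $\phi$ has zero derivative at $\delta^{2}$, while the derivative of the clause term is a positive multiple of the derivative of $k\alpha\mu^{k-1}+k(k-1)(\delta-\alpha)^{2}\mu^{k-2}$, whose value at $\alpha=\delta^{2}$ works out to $k(1-\delta)^{2k-4}(1-k\delta)^{2}$ — zero \emph{iff} $k\delta=1$. Hence $\phi'(\delta^{2})=0$ for $\delta=1/k$ and only then; with this the analysis of $\phi$ on $[0,\delta]$ proceeds in regions. (i) $\phi''(\delta^{2})<0$: the entropy curvature $-1/\alpha-2/(\delta-\alpha)-1/\mu$ is $\Theta(-k^{2})$ near $\delta^{2}$ and dominates the clause curvature $O(rk^{2})=O(k\log k)$. (ii) On $0\le\alpha\le\delta^{2}$, $\phi$ is concave (entropy curvature dominant), so the maximum there is at $\delta^{2}$. (iii) On $\delta^{2}\le\alpha=\delta^{2}v$ with $1\le v=o(k)$, a careful expansion yields $\phi(\delta^{2}v)-\phi(\delta^{2})=k^{-2}\bigl(v-v\ln v-1\bigr)+o(k^{-2})$, and $v-v\ln v-1\le0$ with equality only at $v=1$. (iv) On the bulk $\alpha=\lambda\delta$ with $\lambda\in(0,1)$ (comparable overlaps, including the balanced overlap $\lambda=\tfrac12$), $\phi(\lambda\delta)-\phi(\delta^{2})=\lambda\bigl(r-\tfrac{\log k}{k}\bigr)+r\ln(1-\lambda+\lambda^{2})+O(1/k)$, negative since $r<\log k/k$ and $\ln(1-\lambda+\lambda^{2})\le0$. (v) On the boundary layer $\delta-\alpha=\epsilon$ with $\epsilon\ll\delta$, one has $\phi(\delta-\epsilon)=\phi(\delta)+2\epsilon\ln(1/\epsilon)+O(\epsilon\log k)$, which stays below $\phi(\delta^{2})=2\phi(\delta)$ because $\phi(\delta)=\tfrac1n\ln\mathbf{E}[X_{\delta}]>0$. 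For the finitely many small $k$ not covered by these asymptotics one verifies $\phi\le\phi(\delta^{2})$ directly.

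The main obstacle is steps (iii)–(v): $\phi$ is genuinely non-concave — concave near $0$, near $\delta^{2}$ and near $\delta$ but convex on a middle stretch — so there is no one-line argument; one must match the expansions across the independence-point scale $\alpha\sim\delta^{2}$, the bulk scale $\alpha\sim\delta$, and the thin boundary layer $\alpha\to\delta$, and it is exactly the combination of the criticality $\phi'(\delta^{2})=0$ (which forces $\delta=1/k$) with the requirement $\phi(\delta)>0$, i.e.\ $\mathbf{E}[X_{\delta}]\to\infty$ — which holds for $r<\log k/k+O(1/k)$ — that produces the threshold $\log k/k$. Everything else is bookkeeping with Stirling's formula, the sum over $a$ having only $O(n)$ terms so that an $o(n)$ error in the exponent is harmless once the maximum is isolated. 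Once $\mathbf{E}[X_{\delta}^{2}]=O(\mathbf{E}[X_{\delta}]^{2})$ is established, the Cauchy-Schwarz bound $\mathbf{P}(X_{\delta}>0)\ge\mathbf{E}[X_{\delta}]^{2}/\mathbf{E}[X_{\delta}^{2}]$ gives $\mathbf{P}(X_{\delta}>0)=\Omega(1)$, and invoking — exactly as for Corollary~\ref{thm:borneinf} — the sharpness of the threshold of positive $1$-in-$k$-SAT together with monotonicity upgrades this to $\lim_{n\to\infty}\mathbf{Pr}(1_{k}(n,rn)\text{ is satisfiable})=1$ for every $r<\log k/k$, which is the theorem.
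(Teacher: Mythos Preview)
Your overall strategy matches the paper's exactly: second moment on $X_{\delta}$ with the characteristic $\delta=1/k$, the identity $\phi(\delta^{2})=2\times(\text{first-moment exponent})$ at the independence point (the paper's equation~(\ref{SMC})), the criticality $\phi'(\delta^{2})=0\Leftrightarrow\delta=1/k$ (the paper's Lemma~\ref{lem:PTstat} specialized to $1_{k}$), then Laplace, Cauchy--Schwarz, and sharpness. The only substantive difference is in how you establish that $\alpha^{\ast}=\delta^{2}$ is the \emph{global} maximum of $\phi$ (the content of the paper's Lemma~\ref{maxim}). The paper does not attempt an asymptotic-in-$k$ scale decomposition; it reparameterizes by $\mu=1-\alpha/\delta\in[0,1]$ and gives concrete bounds valid uniformly for every $k\ge3$: on $\mu\in[0,1/2]$ (your high-overlap region $\alpha\in[\delta/2,\delta]$) a two-step bootstrap via auxiliary functions $\tau_{a}$ that freeze $1-\mu+\mu^{2}$ at $a_{0}=0$ and then $a_{1}=0.15$; on $\mu\in[1/2,1]$ it exploits concavity of the \emph{derivatives} of $\log G_{1_{k},1/k}$ and $-\log t_{1/k}$ to bound $(\log\Gamma)'$ below by chords on $[1/2,1-1/k]$ and above by tangent lines on $[1-1/k,1]$. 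Your scale analysis (independence scale $\alpha\sim\delta^{2}$, bulk $\alpha=\lambda\delta$, boundary layer $\alpha\to\delta$) is natural and your leading-order expansions are correct --- in particular $\phi(\lambda\delta)-\phi(\delta^{2})=\lambda(r-\log k/k)+r\ln(1-\lambda+\lambda^{2})+O(1/k)$ and the entropy contribution $(v-v\ln v-1)/k^{2}$ in (iii) check out --- but as written it is an outline rather than a proof: the $O(1/k)$ and $o(k^{-2})$ terms need explicit constants to fix an effective $k_{0}$, the seams between regions (e.g.\ $\lambda\sim1/k$ where (ii)/(iii) hand off to (iv), and $\lambda\to1$ where (iv) hands off to (v)) are not pinned down, and ``verify the finitely many small $k$ directly'' is precisely the labour that the paper's $k$-uniform bounds are designed to avoid so that the statement holds for all $k\ge3$.
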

Specifically,  for $k=3$ a better lower bound at 0.546 has been computed in \cite{DBLP:journals/corr/abs-cs-0508037}
analyzing the success to find out a solution with a specific algorithm.
However our aim is to provide a tool yielding systematically a lower
bound for a large class of CSPs. We give a rough upper bound at $(\log k)^{2}/k$
(valid for $k\geq7$) which shows that our bounds are tight around
the threshold.

\section{\label{sec:General-case}Second moment and characteristic solutions}
%\subsection{Characteristic solutions and their properties}
We first define the \emph{characteristic solutions} before we give the second moment of their number. 
\begin{definition}
\emph{Characteristic valuations} for some $CSP\left(I_{k}\right)$
are $\delta$-valuations for which the probability of satisfying a
uniformly randomly drawn constraint is locally maximum with respect
to $\delta$. The solutions of an instance that are characteristic
valuations are said to be \emph{characteristic solutions} for this
instance. 
\end{definition}
Given some $\delta$-valuation, the probability $\pi_{i}\left(\delta\right)$
that a randomly selected $k$-tuple contains $i$ ones is $\pi_{i}\left(\delta\right)={k \choose i}\delta^{i}\left(1-\delta\right)^{k-i}$.
So the probability that a $\delta$-valuation satisfies with respect
to some set $I_{k}\subset\{1,2,...,k-1\}$ a randomly selected $k$-tuple
is obtained by summing up, the mutually exclusive cases for different
$i$'s in $I$. This probability is $g_{I_{k}}\mbox{\ensuremath{\left(\delta\right)}}=\sum_{i\in I_{k}}\pi_{i}\left(\delta\right)=\sum_{i\in I_k}{k \choose i}\delta^{i}\left(1-\delta\right)^{k-i}$.
Let $\Delta_{I_{k}}$ be the set of reals for which $g_{I_{k}}\left(\delta\right)$
is locally maximum. Clearly, for any $\delta\in\Delta_{I_{k}}$, $\delta$-valuations
are by definition the characteristic valuations of $CSP\left(I_{k}\right)$.

%\begin{lem}
%\label{lem:lem_propres} For any $I_{k}\subseteq\{1,...,k-1\}$, $\Delta_{I_{k}}\neq\emptyset$
%and every $\delta\in\Delta_{I_{k}}$ is a stationary point for $g_{I_{k}}\left(\delta\right)$.\end{lem}
%\begin{proof}
Since $I_{k}\subseteq\{1,...,k-1\}$ then $g_{I_{k}}\left(0\right)=g_{I_{k}}\left(1\right)=0$.
The function $g_{I_{k}}\left(\delta\right)$ being smooth, strictly
positive inside $]0,1[$, it maximizes inside the interval $]0,1[$
at at least one stationary point. Thus for any $I_k \subseteq\{1,...,k-1\}$, $\Delta_{I_{k}}\neq\emptyset$. The fact that every $\delta \in\Delta_{I_{k}}$ is a stationary point for $g_{I_{k}}\left(\delta\right)$ will be used later.
%\end{proof}

\subsection{\label{sub:Second-moment-method}First and second moment of number
of characteristic solutions}

The key idea used in the method presented in this paper is instead
of taking as a random variable the number of solutions, to consider as random variable the number of $\delta$-solutions. The first moment
of $X_{\delta}$ is:

\[
\mathbf{E}[X_{\delta}]={n \choose \delta n}g_{I_{k}}\mbox{\ensuremath{\left(\delta\right)}}^{rn}\sim\frac{1}{\sqrt{2\pi\delta(1-\delta)n}}\gamma_{I_k,r}\left(\delta\right)^{n}\]

Where:\[
\gamma_{I_{k},r}\left(\delta\right)=\frac{g_{I_{k}}\left(\delta\right)^{r}}{\delta^{\delta}\left(1-\delta\right)^{1-\delta}}\]

\begin{remark}
\label{best}
 It is easy to see that $\lim_{n \rightarrow \infty} \gamma_{I_{k},r}\left(\delta\right)=0$ for any  $r>\hat{r}_{I_k,\delta}=\frac{\delta \log \delta+(1-\delta)\log(1-\delta)}{\log g_{I_k}(\delta)}$. By Markov inequality, this means that the $\delta$-solutions does not exist for  $r>\hat{r}_{I_k,\delta}$ and then the lower bound that we can get through $\delta$-solutions is at most $\hat{r}_{I_k,\delta}$.
\end{remark}

For computing the second moment, we consider two
$\delta$-valuations $\sigma_{1}$ and $ $$\sigma_{2}$ having $p$
variables assigned $1$ in $ $$\sigma_{1}$ and $0$ in $\sigma_{2}$.
This defines all the other categories of variables. Indeed, the
number of variables assigned $0$ in $\sigma_{1}$ and $1$ in
$\sigma_{2}$ must be also $p$ in order that $\sigma_{2}$ is a
$\delta$-valuation. $\lfloor\delta n\rfloor-p$ is the number of
variables assigned $1$ in both solutions and $n-\lfloor\delta
n\rfloor-p$ are assigned $0$ in both solutions. First, we give the
probability $\phi_{i,j,\delta}\left(\frac{p}{\lfloor\delta
n\rfloor}\right)$ that a random $k$-tuple has $i$ $1$s under
$\sigma_{1}$ and $j$ $1$s under $\sigma_{2}$ such that $d$
variables of the $k$-tuple are equal to $1$ in both assignments.
$d$ must range from $d_{min}=max(0,i+j-k)$ to $d_{max}=min(i,j)$.

\begin{eqnarray}
\label{phi} \phi_{i,j,\delta}\left(\frac{p}{\lfloor\delta
n\rfloor}\right) & = & \sum_{d_{min}}^{d_{max}}{k \choose i}{i
\choose d}{k-i \choose j-d}\left(\frac{\lfloor\delta
n\rfloor-p}{n}\right)^{d}\left(\frac{p}{n}\right)^{i+j-2
d}\left(\frac{n-\lfloor\delta n\rfloor-p}{n}\right)^{k-i-j+d}
\end{eqnarray}
Summing up over couples $\left(i,j\right)$, we get
$G_{I_k,\delta}\left(\mu\right)$, the probability that a couple of
$\delta$-valuations having $\mu\delta n$ variables taking a
different value in $\sigma_{1}$ or $\sigma_{2}$ satisfies a random
constraint:

\begin{eqnarray*}
G_{I_{k},\delta}\left(\frac{p}{\lfloor\delta n\rfloor}\right) & =
& \sum_{i\in I_{k}}\sum_{j\in
I_{k}}\phi_{i,j,\delta}\left(\frac{p}{\lfloor\delta
n\rfloor}\right)\end{eqnarray*}

We can now write the second moment by summing up over all possible
couples $\left(\sigma_{1},\sigma_{2}\right)$ :

\begin{eqnarray*}
\mathbf{E}[X_{\delta}^{2}] & = & \sum_{p=0}^{min\left(\lfloor\delta n\rfloor,n-\lfloor\delta n\rfloor\right)}{n \choose \left(\lfloor\delta n\rfloor-p\right)\,\, p\,\, p\,\,\left(n-\lfloor\delta n\rfloor-p\right)}\, G_{I_{k},\delta}\left(\frac{p}{\lfloor\delta n\rfloor}\right)^{rn}\end{eqnarray*}

We now estimate $\mathbf{E}[X_{\delta}^{2}]$ as a function of $n$,
using a classical asymptotic estimate of the multinomial
coefficient. For small multinomial numbers the asymptotic estimate
being also an upper bound, it will be sufficient for the
estimation we need. We set : $\mu=\frac{p}{\lfloor\delta
n\rfloor}$.

\begin{eqnarray*}
{n \choose \left(1-\mu\right)\delta n\,\mu\delta n\,\mu\delta
n\,\left(1-\delta-\mu\delta\right)n} \leq\frac{(2\pi
n)^{-3/2}}{\sqrt{2\mu\delta(1-\mu)\delta(1-\delta-\mu\delta)}}\left(t_{\delta}\left(\mu\right)\right)^{-n}
\end{eqnarray*}

where :
$t_{\delta}\left(\mu\right)=\left(\left(1-\mu\right)\delta\right)^{\left(1-\mu\right)\delta}\left(\mu\delta\right)^{2\mu\delta}\left(1-\delta-\mu\delta\right)^{1-\delta-\mu\delta}$.
We have :

\begin{eqnarray}
\label{esp1} \mathbf{E}[X_{\delta}^{2}]\leq\sum_{\mu \in
\{0,\frac{1}{\lfloor\delta
n\rfloor},\ldots,min\left(1,\frac{n-\lfloor\delta
n\rfloor}{\lfloor\delta n\rfloor}\right) \}}\frac{(2\pi
n)^{-3/2}}{\sqrt{2
\mu\delta(1-\mu)\delta(1-\delta-\mu\delta)}}\left(\Gamma_{I_k,\delta,r}\left(\mu\right)\right)^{n}
\end{eqnarray}

with:

\begin{equation}
\label{gamma}
\Gamma_{I_{k},\delta,r}\left(\mu\right)=\frac{G_{I_k,\delta}\left(\mu\right)^{r}}{t_{\delta}\left(\mu\right)}
\end{equation}

Each term in (\ref{esp1}) consisting of a polynomial factor and an
exponential factor in $n$ the sum can be estimated with a discrete
version of Laplace method. Thus :

\begin{lem}
\label{laplace}if $u$ and $v$ are smooth real-valued functions of
one variable $x$, and if $v$ has a single maximum on
$[a,+\infty[,$ located at $\xi_{0}$with $\xi>a$ and if further
$v^{\prime\prime}(\xi_{0})$ is not 0, then :

\[
\frac{1}{\sqrt{n}}\sum_{i=a\omega n}^{\omega
n}u(\frac{i}{n})exp(nv(\frac{i}{n}))\sim
g(\xi_{0})\frac{\sqrt{2\pi}\omega}{\sqrt{|v^{\prime\prime}(\xi_{0})|}}exp(nv(\xi_{0}))\]

\end{lem}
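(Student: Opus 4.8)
The plan is to carry out the classical Laplace (saddle-point) argument in its lattice-sum form. Since $v$ is smooth and $\xi_0$ is a maximum lying in the interior of the range of the summation argument with $v''(\xi_0)\neq 0$, we have $v'(\xi_0)=0$ and $v''(\xi_0)<0$; write $x_i=i/(\omega n)$ for the sample points, so that $x_i$ runs over $[a,1]$ with lattice spacing $(\omega n)^{-1}$ (this is the scaling under which the prefactor $\omega$ in the conclusion arises, matching the applications where the step in the argument is $1/(\delta n)$). Fix a small $\varepsilon\in(0,1/6)$ and split the sum into the bulk $B_n=\{i:|x_i-\xi_0|\le n^{-1/2+\varepsilon}\}$ and the complementary tail.

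For the tail: by the single-maximum hypothesis and compactness of $[a,1]$, there are fixed $\rho>0$ and $\eta>0$ with $v(x)\le v(\xi_0)-\eta$ for $|x-\xi_0|\ge\rho$; on the shell $\rho\ge|x-\xi_0|\ge n^{-1/2+\varepsilon}$ the second-order Taylor expansion (the cubic remainder being dominated by the quadratic term once $\rho$ is small enough) gives $v(x)\le v(\xi_0)-c\,n^{-1+2\varepsilon}$ for some $c>0$. There are $O(n)$ tail indices and $u$ is bounded on $[a,1]$, so the tail contributes $O\!\big(n\,e^{nv(\xi_0)}e^{-c\,n^{2\varepsilon}}\big)$, which after division by $\sqrt n$ is $o\!\big(e^{nv(\xi_0)}\big)$ and hence negligible against the claimed main term.

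For the bulk: on $B_n$ we have $n|x_i-\xi_0|^3=O(n^{-1/2+3\varepsilon})\to0$ and $|x_i-\xi_0|\to0$ uniformly, so the Taylor expansions $v(x)=v(\xi_0)+\tfrac12 v''(\xi_0)(x-\xi_0)^2+O(|x-\xi_0|^3)$ and $u(x)=u(\xi_0)+O(|x-\xi_0|)$ give
\[u(x_i)\,e^{nv(x_i)}=u(\xi_0)\,e^{nv(\xi_0)}\,\exp\!\big(\tfrac12 n\,v''(\xi_0)(x_i-\xi_0)^2\big)\,(1+o(1))\]
uniformly over $i\in B_n$. It remains to evaluate $\sum_{i\in B_n}\exp\!\big(-\tfrac12 n|v''(\xi_0)|(x_i-\xi_0)^2\big)$: this is a Gaussian of width of order $n^{-1/2}$ sampled on a lattice of spacing $(\omega n)^{-1}\ll n^{-1/2}$, so it has $\sim\sqrt n$ effective terms and equals $\omega n\int_{\mathbf{R}}\exp\!\big(-\tfrac12 n|v''(\xi_0)|(t-\xi_0)^2\big)\,dt=\omega\sqrt{2\pi n/|v''(\xi_0)|}$ up to a lower-order Riemann-sum (Euler--Maclaurin) error, the extension of the domain from $B_n$ to $\mathbf{R}$ costing only an exponentially small Gaussian tail $e^{-\Omega(n^{2\varepsilon})}$. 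Multiplying by $u(\xi_0)e^{nv(\xi_0)}$ and dividing by $\sqrt n$ gives $u(\xi_0)\,\dfrac{\sqrt{2\pi}\,\omega}{\sqrt{|v''(\xi_0)|}}\,e^{nv(\xi_0)}$, i.e. the asserted estimate (with $u(\xi_0)$ in place of the $g(\xi_0)$ of the statement).

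The one genuinely delicate point is the sum-to-integral step for an integrand that becomes increasingly concentrated: one must check that the Gaussian width $n^{-1/2}$ stays asymptotically above the lattice spacing $(\omega n)^{-1}$ (so that the effective number of terms grows and the Riemann-sum error is truly of lower order), and that replacing $\xi_0$ by the nearest lattice point — which need not be $\xi_0$ itself — introduces only an $O(1/n)$ displacement absorbed into the $(1+o(1))$ factors. Everything else is the standard peeling of the sum into a Gaussian bulk and an exponentially small tail, the single-maximum hypothesis being exactly what licenses discarding the latter.
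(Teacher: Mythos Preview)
The paper does not prove this lemma at all: it is stated as a classical discrete Laplace estimate and then immediately applied (with $v(\mu)=\log\Gamma_{I_k,\delta,r}(\mu)$ and $u(\mu)$ the polynomial prefactor), so there is no ``paper's own proof'' to compare against. Your argument is the standard one --- peel off an exponentially negligible tail using the strict-maximum hypothesis, Taylor-expand $v$ to second order and $u$ to zeroth order on a window of width $n^{-1/2+\varepsilon}$, and recognize the remaining sum as a Riemann approximation to a Gaussian integral --- and it is correct for the statement as you (rightly) reinterpret it.

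Two small points worth recording. First, you are right that the lemma as printed is garbled: $g(\xi_0)$ should be $u(\xi_0)$, and the factor $\omega$ on the right only makes sense if the argument of $u,v$ is $i/(\omega n)$ rather than $i/n$ --- equivalently, if the lattice spacing in the summation variable is $(\omega n)^{-1}$, exactly as in the paper's application where $\mu=p/\lfloor\delta n\rfloor$. Your parenthetical remark identifies and fixes this. Second, your tail argument tacitly uses that $\xi_0$ lies in the \emph{open} interior of the summation range and that the range is compact; the hypothesis ``single maximum on $[a,+\infty[$ with $\xi_0>a$'' is a little stronger than needed but certainly implies what you use. With those caveats noted, the proof is complete and is precisely the justification the paper omits.
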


We will apply Lemma \ref{laplace}, setting
$v(\mu)=\log(\Gamma_{I_{k},\delta,r}\left(\mu\right))$ and
$u(\mu)=\frac{(2\pi)^{-\frac{3}{2}}}{\sqrt{2\mu\delta^{2}(1-\mu)(1-\delta-\mu\delta)}}$.
Then :
\begin{equation}
\label{lap} \mathbf{E}[X_{\delta}^{2}]\leq
n^{-1}\frac{(2\pi)^{-3/2}}{\sqrt{2(1-\delta)^{3}\delta^{3}}}\times\frac{\sqrt{2\pi}min\left(\delta,(1-\delta)\right)}{\sqrt{|\Gamma^{\prime\prime}_{I_{k},\delta,r}(1-\delta)|}}\left(\max_{\mu
\in[0,min\left(1,\frac{1-\delta}{\delta}\right)]}{\Gamma_{I_{k},\delta,r}\left(\mu\right)}\right)^{n}
\end{equation}

The success of the second moment method relies mainly on the behavior
of $\Gamma_{I_{k},\delta,r}\left(\mu\right)$ for $\mu\in[0,min\left(1,\frac{1-\delta}{\delta}\right)]$

The upper bound of the ratio $\frac{\mathbf{E}[X_{\delta}]^{2}}{\mathbf{E}[X_{\delta}^{2}]}$ will then  depend mainly on its exponential part $\frac{\gamma_{I_k,r}\left(\delta\right)^2}{\max_{\mu \in[0,min\left(1,\frac{1-\delta}{\delta}\right)]}{\Gamma_{I_{k},\delta,r}\left(\mu\right)}}$ that must be equal to $1$ otherwise, all what we will get is the trivial relation $\frac{\mathbf{E}[X_{\delta}]^{2}}{\mathbf{E}[X_{\delta}^{2}]}\geq0$. We will see in the next Section that this achieved through characteristic solutions.

From (\ref{phi}) we will write in the sequel of the paper :

\begin{eqnarray}
 \phi_{i,j,\delta}\left(\mu\right) &
= & \sum_{d=max(0,i+j-k)}^{min(i,j)}{k \choose i}{i \choose d}{k-i
\choose j-d}\kappa_{i,j,d,\delta}\left(\mu\right)\end{eqnarray}

\begin{eqnarray}
\label{tdelta}
\text{with :\hspace{0.5cm}}
\kappa_{i,j,d,\delta}\left(\mu\right)=\left(\left(1-\mu\right)\delta\right)^{d}\left(\mu\delta\right)^{i+j-2d}\left(\left(1-\delta-\mu\delta\right)\right)^{k-i-j+d}
\end{eqnarray}
and :
\begin{eqnarray}
\label{GIK}
G_{I_{k},\delta}\left(\mu\right) & =
& \sum_{i\in I_{k}}\sum_{j\in
I_{k}}\phi_{i,j,\delta}\left(\mu\right)\end{eqnarray}

\subsection{Proof of Theorem \ref{lem:secondmoment} and its Corollary}
 \label{preuve13}
In the following, we sketch first the proof by discussing its most important ingredients. A crucial point for the success of the method is the point where $\mu=1-\delta$ or the independence point. To understand this, consider two valuations drawn independently uniformly at random from the set of $\delta$-valuations. A variable is assigned $1$ under one of the two $\delta$-valuations with probability $\delta$ and $0$ with probability $1-\delta$. Since the  two valuations are selected independently, the probability of being assigned $1$ by a $\delta$-valuation and $0$ by the other is $\delta(1-\delta)$. Thus, according to the notation of the Section \ref{sub:Second-moment-method}, these pairs of  $\delta$-valuations are characterized by Hamming distance $2\mu$ with $\mu=1-\delta$.  These uncorrelated pairs of $\delta$-valuations play a central role in the success of the method.  Indeed:

\begin{eqnarray*}
G_{I_{k},\delta}\left(1-\delta\right) &=&\sum_{i\in I_{k}}\sum_{j\in I_{k}}\phi_{i,j,\delta}\left(1-\delta\right)\\
&=& \sum_{i\in I_{k}}\sum_{j\in I_{k}} \delta^{i+j}\left(1-\delta\right)^{2k-i-j}{k \choose i}\sum_{d}{i \choose d}{k-i \choose j-d}\\
& =&\sum_{i\in I_{k}}\sum_{j\in I_{k}} \delta^{i+j}\left(1-\delta\right)^{2k-i-j}{k \choose i} {k \choose j} \\
&=&\sum_{i\in I_{k}}\sum_{j\in I_{k}} \pi_i\left(\delta\right)\pi_{j}\left(\delta\right)\\
&=&g_{I}\left(\delta\right)^{2}
\end{eqnarray*}

 It is easy to see that $t_{\delta}\left(1-\delta\right)=\left(\delta^{\delta}\left(1-\delta\right)^{1-\delta}\right)^{2}$, thus :
\begin{equation}
 \Gamma_{I_{k},\delta,r}\left(1-\delta\right)=\frac{G_{I_k,\delta}\left(1-\delta\right)^{r}}{t_{\delta}\left(1-\delta\right)}=\frac{g_{I_k}\left(\delta\right)^{2r}}{\left(\delta^{\delta}\left(1-\delta\right)^{1-\delta}\right)^{2}}=\gamma_{I_{k},r}\left(\delta\right)^{2}
 \label{SMC}
 \end{equation}

Since $\Gamma_{I_{k},\delta,r}\left(1-\delta\right)/\gamma_{I_{k},r}\left(\delta\right)^{2}=1$,
if $\mu=1-\delta$ is not the global maximum of $\Gamma_{I_{k},\delta,r}\left(\mu\right)$, there exist some $\mu$ for which $\gamma_{I_{k}}\left(\delta\right)^{2}/\Gamma_{I_{k},\delta,r}\left(\mu\right)<1$ making the method to fail.

Consequently, a necessary condition for the success of the method is that $\mu=1-\delta$ is a stationary point. Lemma \ref{lem:PTstat} states that this is the case only for the characteristic solutions. 

Let $\rho=\max_{\mu \in [0,min(1,\frac{1-\delta}{\delta})]}\left(\frac{\delta}{1-\mu}+\frac{2\delta}{\mu}+\frac{\delta^{2}}{1-\delta-\delta\mu}\right)$, $\nu=\max_{\mu \in [0,min(1,\frac{1-\delta}{\delta})]}\left(\log \left(G_{I_{k},\delta}\left(\mu\right)\right)\right)^{\prime\prime}$. Let:
\begin{equation}
\label{rik}
r^*_{I_k} =\frac{\rho}{\nu} 
\end{equation}
The Lemma \ref{lem:maxglobal} states that  $r^*_{I_k}$ is well defined and that it is strictly positive and for any $r<r^*_{I_k}$,  the second derivative of $\log \left(\Gamma_{I_{k},\delta,r}\left(\mu\right)\right)$ is negative for any $\mu \in [0,min\left(1,(1-\delta)/\delta \right)]$. This function is then concave  in the previous interval. Combining the two lemmas, we can conclude that there is a range of ratios $]0,r^*_{I_k}[$ for which $\mu=1-\delta$ is the global maximum of $\Gamma_{I_k,\delta,r}$. 

\begin{remark}
In general, the point $\mu=1-\delta$ continues to be the global maximum in a range beyond $r^*_{I_k}$ after the function ceases to be concave, allowing through a more precise analysis to get better lower bound than $r^*_{I_k}$. However, a general bound beyond concavity is hard to figure out for the class and we do not need this fact for the proof of Theorem \ref{lem:secondmoment} which aim is to give the conditions under which the second moment succeeds regardless of the value of the bound obtained. When one needs for a particular problem to compute the best lower bound with respect to $\delta$-solutions, a finer analysis is required for this particular problem. This is what we do to get the best possible lower bound with respect to $\delta$-solutions for positive $1$-in-$k$-SAT.
\end{remark}

\begin{lem}
\label{lem:PTstat}  $\Gamma_{I_{k},\delta,r}^\prime\left(1-\delta\right)=0$ iff $\delta\in\Delta_{I_{k}}$. \end{lem}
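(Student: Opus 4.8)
The plan is to compute $\Gamma_{I_k,\delta,r}^\prime(1-\delta)$ directly by differentiating (\ref{gamma}) with respect to $\mu$ and evaluating at $\mu = 1-\delta$, and to show that the only surviving term is proportional to $g_{I_k}^\prime(\delta)$. Since $\Gamma_{I_k,\delta,r}(\mu) = G_{I_k,\delta}(\mu)^r / t_\delta(\mu)$, logarithmic differentiation gives
\[
\frac{\Gamma_{I_k,\delta,r}^\prime(\mu)}{\Gamma_{I_k,\delta,r}(\mu)} = r\,\frac{G_{I_k,\delta}^\prime(\mu)}{G_{I_k,\delta}(\mu)} - \frac{t_\delta^\prime(\mu)}{t_\delta(\mu)}.
\]
Because $\Gamma_{I_k,\delta,r}(1-\delta) = \gamma_{I_k,r}(\delta)^2 > 0$ by (\ref{SMC}), the sign of the derivative is governed entirely by the bracketed expression, so it suffices to show that $r\,G_{I_k,\delta}^\prime(1-\delta)/G_{I_k,\delta}(1-\delta) - t_\delta^\prime(1-\delta)/t_\delta(1-\delta) = 0$ if and only if $g_{I_k}^\prime(\delta) = 0$.

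First I would dispatch the $t_\delta$ term. From the explicit formula $t_\delta(\mu) = ((1-\mu)\delta)^{(1-\mu)\delta}(\mu\delta)^{2\mu\delta}(1-\delta-\mu\delta)^{1-\delta-\mu\delta}$, one computes $\frac{d}{d\mu}\log t_\delta(\mu) = -\delta\log((1-\mu)\delta) + 2\delta\log(\mu\delta) - \delta\log(1-\delta-\mu\delta)$ (the derivative-of-the-exponent terms cancel against $1$ in the usual way for $x\log x$). Substituting $\mu = 1-\delta$, the first and third logarithms both become $\log(\delta(1-\delta))$ and cancel, while the middle one becomes $2\delta\log(\delta(1-\delta))$; after collecting, $\frac{d}{d\mu}\log t_\delta(\mu)\big|_{\mu=1-\delta} = 0$. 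So the $t_\delta$ contribution vanishes identically at the independence point, and the condition reduces to $G_{I_k,\delta}^\prime(1-\delta) = 0$ (recall $G_{I_k,\delta}(1-\delta) = g_{I_k}(\delta)^2 > 0$ and $r > 0$).

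It remains to show $G_{I_k,\delta}^\prime(1-\delta)$ is a nonzero multiple of $g_{I_k}^\prime(\delta)$. The cleanest route is to reuse the combinatorial identity already established in the excerpt: in the derivation of (\ref{SMC}) the inner sum $\sum_d \binom{i}{d}\binom{k-i}{j-d}$ collapses to $\binom{k}{j}$ by Vandermonde \emph{only} after setting $\mu = 1-\delta$, so I cannot simply differentiate that closed form. Instead I would differentiate $G_{I_k,\delta}(\mu) = \sum_{i,j\in I_k}\phi_{i,j,\delta}(\mu)$ term by term using the expression $\phi_{i,j,\delta}(\mu) = \sum_d \binom{k}{i}\binom{i}{d}\binom{k-i}{j-d}\kappa_{i,j,d,\delta}(\mu)$ with $\kappa_{i,j,d,\delta}(\mu) = ((1-\mu)\delta)^d(\mu\delta)^{i+j-2d}(1-\delta-\mu\delta)^{k-i-j+d}$, compute $\kappa^\prime_{i,j,d,\delta}(1-\delta)$, and then resum over $d$. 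The key arithmetic fact is that differentiating the three $\mu$-dependent base factors and evaluating at $\mu=1-\delta$ produces, after factoring out the common $\delta^{i+j}(1-\delta)^{2k-i-j}$, a $d$-dependent weight that is affine in $d$; the constant part resums (via Vandermonde) to $\binom{k}{i}\binom{k}{j}$ and the part linear in $d$ resums to a term involving $\binom{k-1}{i-1}\binom{k-1}{j-1}$-type sums. Bookkeeping these yields $G_{I_k,\delta}^\prime(1-\delta) = C(\delta,k)\cdot 2\,g_{I_k}(\delta)\,g_{I_k}^\prime(\delta)$ for an explicit strictly positive constant $C(\delta,k)$ (one expects the factor $2\,g_{I_k}g_{I_k}^\prime = (g_{I_k}^2)^\prime$ precisely because $G_{I_k,\delta}(1-\delta) = g_{I_k}(\delta)^2$ and the independence point mimics a product of two independent copies). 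Since $C(\delta,k) > 0$ and $g_{I_k}(\delta) > 0$ on $(0,1)$, we get $G_{I_k,\delta}^\prime(1-\delta) = 0 \iff g_{I_k}^\prime(\delta) = 0 \iff \delta \in \Delta_{I_k}$ (the last equivalence because, as noted after the definition of $\Delta_{I_k}$, every element of $\Delta_{I_k}$ is an interior stationary point of $g_{I_k}$, and conversely one should check any interior stationary point of the strictly-positive function $g_{I_k}$ that is a local max lies in $\Delta_{I_k}$ — a minor point, possibly needing $g_{I_k}$ to have no stationary inflection, which can be argued or absorbed into the definition).

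The main obstacle is the resummation in the last step: correctly tracking how the $d$-linear weight interacts with the Vandermonde convolution so that the cross-terms assemble into the clean factor $(g_{I_k}^2)^\prime$ rather than some less transparent combination, and verifying the prefactor $C(\delta,k)$ never vanishes. A slicker alternative that may avoid the bookkeeping entirely: interpret $G_{I_k,\delta}(\mu)$ probabilistically as $\mathbf{E}[\mathbf{1}_{A}\mathbf{1}_{B}]$ where $A,B$ are the events that a random $k$-tuple has its number of $1$'s in $I_k$ under $\sigma_1$, $\sigma_2$ respectively, with the overlap parametrized by $\mu$; at $\mu=1-\delta$ the pair $(\sigma_1,\sigma_2)$ is exactly an independent pair, so $\partial_\mu \mathbf{E}[\mathbf{1}_A\mathbf{1}_B]$ at that point should, by a symmetry/exchangeability argument, equal $2\,\mathbf{E}[\mathbf{1}_A]\,\partial_\mu\mathbf{E}[\mathbf{1}_B]\big|_{\text{indep}}$, and the latter single-copy derivative is visibly proportional to $g_{I_k}^\prime(\delta)$. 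I would try the probabilistic argument first and fall back on the explicit computation if the symmetry step needs justification.
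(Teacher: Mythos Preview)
Your overall strategy coincides with the paper's: first dispatch $t_\delta^\prime(1-\delta)=0$, then compute $G_{I_k,\delta}^\prime(1-\delta)$ by differentiating each $\kappa_{i,j,d,\delta}$, evaluating at $\mu=1-\delta$, and resumming over $d$ with Vandermonde and the hypergeometric mean identity. This is precisely the route the paper takes (the resummation is carried out in its Appendix~A).

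The gap is in what you expect that resummation to produce. You assert the bookkeeping yields $G_{I_k,\delta}^\prime(1-\delta)=C(\delta,k)\cdot 2\,g_{I_k}(\delta)g_{I_k}^\prime(\delta)$ with $C>0$, and you offer two heuristics for this form. Both heuristics are wrong. The identity $G_{I_k,\delta}(1-\delta)=g_{I_k}(\delta)^2$ equates a function of $\mu$ (at a single point) with a function of $\delta$, so it says nothing about $\partial_\mu G$; and in your probabilistic picture the marginal laws of $\sigma_1,\sigma_2$ are fixed $\delta$-valuations for \emph{every} $\mu$, so $\mathbf{E}[\mathbf{1}_A]=\mathbf{E}[\mathbf{1}_B]=g_{I_k}(\delta)$ identically and $\partial_\mu\mathbf{E}[\mathbf{1}_B]=0$ --- the product-rule argument collapses. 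What the computation actually gives (via $\sum_d d\binom{i}{d}\binom{k-i}{j-d}=\tfrac{ij}{k}\binom{k}{j}$ and the auxiliary sum $h_{I_k}(\delta)=\sum_{i\in I_k}i\binom{k}{i}\delta^i(1-\delta)^{k-i}$, with $h_{I_k}-k\delta g_{I_k}=\delta g_{I_k}^\prime$) is a \emph{negative} multiple of $\bigl(g_{I_k}^\prime(\delta)\bigr)^2$, namely $G_{I_k,\delta}^\prime(1-\delta)=-\delta\bigl(g_{I_k}^\prime(\delta)\bigr)^2/\bigl(k(1-\delta)^2\bigr)$. Since this too vanishes iff $g_{I_k}^\prime(\delta)=0$, your conclusion survives, but only after you discard the predicted form and do the resummation honestly.

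Two minor remarks. In your $t_\delta$ step the first and third logarithms become $\log(\delta^2)$ and $\log((1-\delta)^2)$, not both $\log(\delta(1-\delta))$; the total still cancels, but for a different reason than you state. And your closing caveat is well taken: the computation really yields $\Gamma^\prime(1-\delta)=0\iff g_{I_k}^\prime(\delta)=0$, and the passage to ``$\delta\in\Delta_{I_k}$'' tacitly identifies interior stationary points with local maxima --- a point the paper also glosses over.
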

\begin{proof}
Considering (\ref{gamma}), it is easy to check that the derivative of $t_\delta^\prime\left( \mu\right)$ (defined in (\ref{tdelta})) is such that $t_\delta^\prime\left( 1-\delta\right)=0$. It is then necessary and sufficient  that $G_{I_k,\delta}^\prime\left(1-\delta\right)=0$.  It can be shown (see Appendix \ref{eqgik}), that:
\begin{equation}
\label{giketg}
G_{I_k,\delta}^\prime\left(1-\delta\right)=-\frac{\delta \left({g}_{I_k}\left(\delta\right)^\prime\right)^2}{k \left(1-\delta\right)^2}
\end{equation}
which is equal to $0$ iff ${g}_{I_k}\left(\delta\right)^\prime=0$ i.e. $\delta \in \Delta_{I_k}$.
\end{proof}
\begin{lem}
\label{lem:maxglobal} $r^*_{I_k}$ (as defined in (\ref{rik})) is strictly greater than $0$  and for 
every $r<r^*_{I_k}$,   $\log \left(\Gamma_{I_{k},\delta,r}\left(\mu\right)\right)^{\prime \prime}<0$ for $\mu \in [0,min\left(1,(1-\delta)/\delta \right)]$. \end{lem}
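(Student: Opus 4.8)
The plan is to control $\log\bigl(\Gamma_{I_k,\delta,r}(\mu)\bigr)^{\prime\prime}$ on the whole interval by separating the two pieces of $\log\Gamma = r\log G_{I_k,\delta} - \log t_\delta$. Differentiating twice, $\log(\Gamma_{I_k,\delta,r}(\mu))^{\prime\prime} = r\,(\log G_{I_k,\delta}(\mu))^{\prime\prime} - (\log t_\delta(\mu))^{\prime\prime}$. First I would compute $-(\log t_\delta(\mu))^{\prime\prime}$ directly from the closed form $t_\delta(\mu)=\bigl((1-\mu)\delta\bigr)^{(1-\mu)\delta}(\mu\delta)^{2\mu\delta}(1-\delta-\mu\delta)^{1-\delta-\mu\delta}$; since $\log t_\delta(\mu)$ is a sum of terms of the shape $a(\mu)\log a(\mu)$ with $a$ affine in $\mu$, each contributes $\delta^2/a(\mu)$ (up to the constant factor $\delta$ absorbed in the chain rule), so $-(\log t_\delta(\mu))^{\prime\prime} = -\delta\Bigl(\frac{\delta}{1-\mu}+\frac{2\delta}{\mu}+\frac{\delta^2}{1-\delta-\delta\mu}\Bigr)\cdot\frac1\delta$ — in any case a negative quantity whose magnitude is bounded below by $\rho$ over the interval, by the very definition of $\rho$. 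This is the stabilizing term.

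Next I would bound the potentially destabilizing term $r\,(\log G_{I_k,\delta}(\mu))^{\prime\prime}$ from above by $r\,\nu$, again directly from the definition of $\nu$ as the maximum of $(\log G_{I_k,\delta}(\mu))^{\prime\prime}$ over the interval. Here one must first check that $\nu$ and $\rho$ are genuinely well defined, i.e. that the maxima exist: $G_{I_k,\delta}(\mu)$ is a positive polynomial on the compact interval $[0,\min(1,(1-\delta)/\delta)]$ — positivity at the endpoints needs a small argument using non-triviality of $R_{inv}$ and $I_k\subseteq\{1,\dots,k-1\}$ — so $\log G_{I_k,\delta}$ and its second derivative are continuous there, and $\rho$ is a maximum of a continuous function that blows up to $+\infty$ only at the endpoints $\mu=0,1$ which may or may not lie in the interval; if an endpoint is included one checks the quantity stays finite there too (the $2\delta/\mu$ term is the delicate one, but $\mu=0$ is excluded since $\lfloor\delta n\rfloor\ge1$, or more cleanly one works on the open interval and notes $-(\log t_\delta)^{\prime\prime}\to-\infty$ at the boundary, which only helps concavity). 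Combining, for every $\mu$ in the interval $\log(\Gamma_{I_k,\delta,r}(\mu))^{\prime\prime}\le r\nu-\rho<0$ precisely when $r<\rho/\nu=r^*_{I_k}$, provided $\nu>0$; if $\nu\le0$ then $\log G_{I_k,\delta}$ is already concave and $\log\Gamma$ is concave for all $r>0$, so $r^*_{I_k}$ can be taken as $+\infty$ and there is nothing to prove. This also shows $r^*_{I_k}=\rho/\nu>0$ since $\rho>0$ (the term $2\delta/\mu$ alone is positive) and $\nu$ is finite.

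The main obstacle I anticipate is not the concavity estimate itself, which is a mechanical chain-rule computation, but pinning down the well-definedness claims cleanly: showing $G_{I_k,\delta}(\mu)$ stays bounded away from $0$ on the closed interval so that $\log G_{I_k,\delta}$ is $C^2$ there, and handling the endpoint behaviour of $\rho$. A secondary subtlety is that $\delta$ is not yet fixed at this stage — the lemma is stated for an arbitrary $\delta\in(0,1)$, and the choice $\delta\in\Delta_{I_k}$ that makes $\mu=1-\delta$ stationary (Lemma \ref{lem:PTstat}) is only invoked afterwards when assembling Theorem \ref{lem:secondmoment}; so one should be careful that $\rho,\nu,r^*_{I_k}$ in the statement implicitly carry a $\delta$ dependence and the inequality $r<r^*_{I_k}$ is meant for whichever $\delta$ is under consideration. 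Once these bookkeeping points are settled, the estimate $\log(\Gamma_{I_k,\delta,r}(\mu))^{\prime\prime}\le r\nu-\rho$ finishes the proof in one line.
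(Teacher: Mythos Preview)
Your decomposition and the estimate $\bigl(\log\Gamma_{I_k,\delta,r}(\mu)\bigr)^{\prime\prime}\le r\nu-\rho$ are exactly the paper's argument, and your bookkeeping remarks about the positivity of $G_{I_k,\delta}$ on the closed interval and the endpoint behaviour of $-(\log t_\delta)^{\prime\prime}$ are the right things to check (the paper handles the endpoint positivity of $G$ by exhibiting, for each $i\in I_k$, a surviving term $\kappa_{i,i,d,\delta}$ at $\mu=0$ and at $\mu=\min(1,(1-\delta)/\delta)$).

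There is, however, one genuine gap. The first clause of the lemma asserts that $r^*_{I_k}=\rho/\nu$, \emph{as defined in \textup{(\ref{rik})}}, is strictly positive; this forces you to prove $\nu>0$. Your case split ``if $\nu\le 0$ then $r^*_{I_k}$ can be taken as $+\infty$'' does not discharge that clause, and the line ``$\nu$ is finite'' does not give $\nu>0$ either. The paper closes this by evaluating $(\log G_{I_k,\delta})^{\prime\prime}$ at the independence point $\mu=1-\delta$: since $G'_{I_k,\delta}(1-\delta)=0$ (this uses $\delta\in\Delta_{I_k}$, i.e.\ Lemma~\ref{lem:PTstat}), one has $(\log G_{I_k,\delta})^{\prime\prime}(1-\delta)=G''_{I_k,\delta}(1-\delta)/G_{I_k,\delta}(1-\delta)$, and a direct computation gives
\[
G''_{I_k,\delta}(1-\delta)=\frac{\delta^{2}\,g_{I_k}^{\prime\prime}(\delta)^{2}}{k(k-1)}>0,
\qquad
G_{I_k,\delta}(1-\delta)=g_{I_k}(\delta)^{2}>0,
\]
so $\nu\ge (\log G_{I_k,\delta})^{\prime\prime}(1-\delta)>0$. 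Note this step is precisely where the hypothesis $\delta\in\Delta_{I_k}$ enters Lemma~\ref{lem:maxglobal}; your observation that the lemma seems to be stated for arbitrary $\delta$ is well taken, but the paper's proof (and indeed the well-definedness of $r^*_{I_k}$ as a positive number) tacitly requires the characteristic choice of $\delta$.
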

\begin{proof}
\begin{eqnarray*}
\left(\log\left(\Gamma_{I_{k},\delta,r}\left(\mu\right)\right)\right)^{\prime\prime} &=& \left(-\frac{\delta}{1-\mu}-\frac{2\delta}{\mu}-\frac{\delta^{2}}{1-\delta-\delta\mu}\right)+r \left(\log\left(G_{I_{k},\delta}\left(\mu\right)\right)\right)^{\prime\prime}\\
\end{eqnarray*}
It can be shown (see Appendix \ref{conc}) that $\left(-\frac{\delta}{1-\mu}-\frac{2\delta}{\mu}-\frac{\delta^{2}}{1-\delta-\delta\mu}\right)$ is negative and bounded from above by $-\rho$ and that $\left(\log\left(G_{I_{k},\delta}\left(\mu\right)\right)\right)^{\prime\prime}$ is positive and bounded from above by $\nu$, then $\left(\log\left(\Gamma_{I_{k},\delta,r}\left(\mu\right)\right)\right)^{\prime\prime}  \leq -\rho+r \nu<0$ if $r<{\rho}/{\nu}=r^*_{I_k}$.
\end{proof}
%\noindent The proofs of the two previous lemmas are rather technical. They are given in Appendix. 

Now we are in position to give the proof of Theorem \ref{lem:secondmoment}.

\textbf{Proof of Thorem \ref{lem:secondmoment}}

Thanks to Lemma  \ref{lem:PTstat}, we know that $\mu=1-\delta$ is a stationary point for $ \log(\Gamma_{I_{k},\delta,r}\left(\mu\right))$ and thanks to Lemma \ref{lem:maxglobal}, we know that  $\log(\Gamma_{I_{k},\delta,r}\left(\mu\right))$ is concave for $r<r^*_{I_k}$. Combining these two facts, we deduce that $\mu=1-\delta$ is a global maximum for $\log(\Gamma_{I_{k},\delta,r}\left(\mu\right))$. The inequality (\ref{lap}) becomes:

\[
\mathbf{E}[X_{\delta}^{2}]\leq n^{-1}\frac{(2\pi)^{-3/2}}{\sqrt{2(1-\delta)^{3}\delta^{3}}}\times\frac{\sqrt{2\pi}min\left(\delta,(1-\delta)\right)}{\sqrt{|\Gamma^{\prime\prime}_{I_{k},\delta,r}(1-\delta)|}}(\Gamma_{I_{k},\delta,r}\left(1-\delta\right))^{n}\]

On putting $C_{1}=\frac{(2\pi)^{-3/2}}{\sqrt{2(1-\delta)^{3}\delta^{3}}}\times\frac{\sqrt{2\pi}min\left(\delta,(1-\delta)\right)}{\sqrt{|\Gamma^{\prime\prime}_{I_{k},\delta,r}(1-\delta)|}}$
and having thanks to (\ref{SMC}) $\Gamma_{I_{k},\delta,r}\left(1-\delta\right)=\gamma_{I_{k},r}(\delta)^{2}$,
this yields : $\mathbf{E}[X_{\delta}^{2}]\leq n^{-1}C_{1}(\gamma_{I_{k},r}(\delta)^{2n})$.
Allowing for the relation $\mathbf{E}[X_{\delta}]\geq\frac{e^{-1/6}}{\sqrt{2\pi\delta(1-\delta)n}}\left(\gamma_{I_k,r}\left(\delta\right)\right)^{n}$,
we deduce :

\[
\frac{\mathbf{E}[X_{\delta}]^{2}}{\mathbf{E}[X_{\delta}^{2}]}\geq C_{2}\]

$C_{2}$ being a positive constant.  Thus Theorem \ref{lem:secondmoment} is proved. 

From the Creignou and Daud\'e criterion \cite{CombinatorialsharpnesscriterionandphasetransitionclassificationforrandomCSPs.::2004::CreignouDaude,DBLP:journals/dm/CreignouD09}
for $k\geq3$ a $CSP(I_{k})$ is neither depending on one component nor
strongly depending on a 2XOR-relation, it can be stated according to
the Friedgut's theorem in \cite{Sharpthresholdsforgraphpropertiesandthek-SATproblem::1999::Friedgut}
the following fact :
\begin{fact}
For every $k\geq3$ and a random $CSP(I_{k})$, there exists a function
$\lambda_{k}(n)$ such that for any $\epsilon>0$:

\[
\lim_{n\rightarrow\infty}Pr(I_{k}(rn,n)\: is\: satisfiable)=\begin{cases}
1 & \mbox{if}\: r>\lambda_{k}(n)(1-\epsilon)\\
0 & \mbox{if}\: r<\lambda_{k}(n)(1-\epsilon)\end{cases}\]

\end{fact}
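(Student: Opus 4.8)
The plan is to derive the Fact as a direct application of Friedgut's sharp threshold theorem, after verifying for $R_{inv}$ the two non-degeneracy conditions isolated by Creignou and Daud\'e. First I would observe that, in the i.i.d. constraint model $\Omega(I_{k},m,n)$, the event ``$I_{k}(rn,n)$ is satisfiable'' is a monotone property, decreasing in the number of constraints: deleting a constraint can only turn an unsatisfiable instance into a satisfiable one. This places us exactly in the scope of Friedgut's theorem, which guarantees that the satisfiability threshold of a random $CSP(S)$ is sharp --- i.e. there is a scaling function $\lambda_{k}(n)$ with the claimed $0/1$ behaviour --- unless $S$ exhibits one of a short, explicit list of local obstructions. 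For a single relation $R$ the Creignou--Daud\'e criterion makes this list concrete: a coarse transition occurs only if $R$ \emph{depends on one component} (after discarding inessential coordinates $R$ collapses to a unary constraint, yielding a coupon-collector type transition) or $R$ \emph{strongly depends on a $2$-XOR relation} (some pair of its coordinates carries, robustly under conditioning of the others, the relation $\{(0,1),(1,0)\}$ or $\{(0,0),(1,1)\}$, so that the instance behaves like random $2$-XOR-SAT). Hence the entire argument reduces to ruling out both cases for a non-trivial, permutation-invariant $R_{inv}$ of arity $k\geq 3$.

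For the first case, suppose satisfaction of $R_{inv}$ depended only on one coordinate; by invariance under permutations it would then depend separately on every coordinate, and combined with non-triviality this forces $R_{inv}\subseteq\{(0,\dots,0),(1,\dots,1)\}$, contradicting non-triviality. For the second case I would exploit the layered description of $R_{inv}$, namely $R_{inv}=\bigcup_{i\in I_{k}}\{t\in\{0,1\}^{k}:|\{\ell:t_{\ell}=1\}|=i\}$ with $I_{k}\subseteq\{1,\dots,k-1\}$. Fixing $k-2$ coordinates to a pattern containing $a$ ones leaves, on the remaining two coordinates, the relation $\{(t_{k-1},t_{k}):a+t_{k-1}+t_{k}\in I_{k}\}$, which is one of the eight symmetric binary relations and is a $2$-XOR relation only when $a\notin I_{k}$, $a+1\in I_{k}$, $a+2\notin I_{k}$ (giving $\{(0,1),(1,0)\}$) or $a\in I_{k}$, $a+1\notin I_{k}$, $a+2\in I_{k}$ (giving $\{(0,0),(1,1)\}$). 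I would then check, by a finite inspection of the windows $\{a,a+1,a+2\}$ compatible with $I_{k}$, that for $k\geq 3$ there is always an alternative choice of fixed coordinates (a different $a$, or a different coordinate pair) whose residual relation has at least three tuples, so the dependence on any pair is never ``strong'' in the sense required by the criterion; this is what is needed, since, as the example of positive $1$-in-$3$-SAT shows, a $2$-XOR relation merely \emph{arising} under some conditioning is perfectly compatible with a sharp threshold.

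With both obstructions excluded for every $k\geq 3$, Friedgut's theorem --- in the packaged form of the Creignou--Daud\'e classification --- applies to $CSP(I_{k})$ and supplies the threshold function $\lambda_{k}(n)$ with precisely the stated behaviour, which is the Fact. I expect the main obstacle to be not probabilistic but definitional: correctly importing the Creignou--Daud\'e notions of ``depending on one component'' and ``strongly depending on a $2$-XOR relation'' and matching them against the layered structure of $R_{inv}$. Once those definitions are pinned down the verification uses only that $R_{inv}$ is non-trivial, permutation invariant and of arity at least $3$, and is otherwise a routine finite case check.
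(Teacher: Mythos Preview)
Your approach is exactly the paper's: the Fact is obtained by citing Friedgut's sharp-threshold theorem in the form packaged by Creignou--Daud\'e, after asserting that a non-trivial permutation-invariant relation of arity $k\ge 3$ neither ``depends on one component'' nor ``strongly depends on a $2$-XOR relation.'' The paper does not verify these two hypotheses at all---it simply states them and cites the references---so your proposal is actually \emph{more} detailed than what the paper provides.

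That said, there is a genuine gap in your sketched verification of the $2$-XOR condition. Take $k$ even and $I_k=\{1,3,5,\dots,k-1\}$, the parity relation ``an odd number of ones.'' This is non-trivial and permutation-invariant, so it lies in the class. Fixing any $k-2$ coordinates to a pattern with $a$ ones leaves the residual $\{(t,t'):a+t+t'\text{ is odd}\}$, which is $\{(0,1),(1,0)\}$ if $a$ is even and $\{(0,0),(1,1)\}$ if $a$ is odd---a $2$-XOR relation in \emph{every} case. Hence your promised ``finite inspection'' cannot produce an alternative conditioning with a residual of three or more tuples: for this $I_k$ no such conditioning exists. Yet random $k$-XOR has a sharp threshold, so the Fact is still true for parity; what fails is your working definition of ``strongly depending on a $2$-XOR relation.'' You already anticipated that the main obstacle is definitional, and this example shows the concern is real: before carrying out the case check you must pin down the exact Creignou--Daud\'e condition (which is more restrictive than ``every $2$-residual is a $2$-XOR''), since under your current formulation the argument would incorrectly exclude parity from having a sharp threshold.
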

It follows that for any $CSP\left(I_{k}\right),$ if $r<r^*_{I_k}$
as defined (\ref{rik}), then : 

$\lim{}_{n\rightarrow\infty}\Pr\left(I_{k}\left(n,rn\right)\mbox{ is satisfiable }\right)=1$.
Thus Corollary \ref{thm:borneinf} is proved.

\section{\label{sec:Monotone--in--SAT-case}Positive $1$-in-$k$-SAT case:
proof of Theorem \ref{thm:1ksat}}

For $1$-in-$k$-SAT, we denote the corresponding $I_{k}$ by $1_{k}=\{1\}$.
The function $g_{1_{k}}\left(\delta\right)=k\delta\left(1-\delta\right)^{k-1}$.
It is easy to check that $\Delta_{1_{k}}=\{\frac{1}{k}\}$. 
We note first that the best lower bound that we can hope to get is $\hat{r}_{1_k,1/k}$ as defined in Remark \ref{best}. It is easy to check that $\lim_{k\rightarrow \infty} \frac{\hat{r}_{1_k,1/k}}{\log k/k}=1$. Then asymptotically, the best lower bound that can be obtained with respect to $1/k$-solutions is  $\log k/k$.

For the
second moment, as previously we consider only $\delta$-valuations.
Only the function $G_{1_{k},1/k}\left(\mu\right)$ changes:
\[
G_{1_{k},1/k}\left(\mu\right)=k\left(k-1\right)\kappa_{1,1,1,1/k}\left(\mu\right)+k\kappa_{1,1,0,1/k}\left(\mu\right)=k^{1-k}\left(k-1-\mu\right)^{k-2}\left(k\left(1-\mu+\mu^{2}\right)-1\right)\]
Thanks to Lemma $ $\ref{lem:PTstat}, we know that $\mu=1-1/k$ is
stationary point of $\Gamma_{1_{k},\frac{1}{k},r}\left(\mu\right)$
and that $\Gamma_{1_{k},\frac{1}{k},r}\left(1-1/k\right)=\gamma_{1_{k},r}\left(1/k\right)^{2}$.
It remains to prove that it is a global maximum for $r\leq\log k/k$. This bound goes in general beyond concavity so we need a finer analysis. That is the purpose of this Lemma.
\begin{lemma}
\label{maxim}
For any $\mu \in [0,1]$, $\Gamma_{1_{k},\frac{1}{k},r}\left(\mu\right) \leq \gamma_{1_{k},r}\left(1/k\right)^{2}$.
\end{lemma}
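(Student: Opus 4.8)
The plan is to show that $\mu_{0}:=1-1/k$ is the global maximum of $\log\Gamma_{1_{k},\frac{1}{k},r}$ on $[0,1]$; since $\Gamma_{1_{k},\frac{1}{k},r}(\mu_{0})=\gamma_{1_{k},r}(1/k)^{2}$ by (\ref{SMC}) and $\mu_{0}$ is already known to be stationary (Lemma~\ref{lem:PTstat}), this is exactly the assertion. The first step is a reduction to the extremal ratio $r=\log k/k$. Differentiating the explicit formula for $G_{1_{k},1/k}$ and simplifying gives
\[ (\log G_{1_{k},1/k})'(\mu)=\frac{k\,(\mu-\mu_{0})\,\bigl(2(k-1)-k\mu\bigr)}{(k-1-\mu)\,\bigl(k(1-\mu+\mu^{2})-1\bigr)}, \]
whose denominator and the factor $2(k-1)-k\mu$ are positive on $[0,1]$ for $k\geq3$; hence $(\log G_{1_{k},1/k})'$ has the sign of $\mu-\mu_{0}$, so $G_{1_{k},1/k}$ attains its minimum over $[0,1]$ at $\mu_{0}$, where its value is $g_{1_{k}}(1/k)^{2}$. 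In particular $G_{1_{k},1/k}(\mu)\geq g_{1_{k}}(1/k)^{2}$ for every $\mu$, and since (using $t_{1/k}(\mu_{0})=(\delta^{\delta}(1-\delta)^{1-\delta})^{2}$)
\[ \frac{\Gamma_{1_{k},\frac{1}{k},r}(\mu)}{\gamma_{1_{k},r}(1/k)^{2}}=\Bigl(\frac{G_{1_{k},1/k}(\mu)}{g_{1_{k}}(1/k)^{2}}\Bigr)^{r}\cdot\frac{t_{1/k}(\mu_{0})}{t_{1/k}(\mu)} \]
has a base $\geq1$ in the exponent, it is nondecreasing in $r$. So it suffices to prove the lemma for $r=\log k/k$, which I assume henceforth.

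Write $\Phi(\mu):=\log\Gamma_{1_{k},\frac{1}{k},r}(\mu)-\log\gamma_{1_{k},r}(1/k)^{2}=\frac{\log k}{k}A(\mu)-B(\mu)$, where $A(\mu):=\log G_{1_{k},1/k}(\mu)-\log G_{1_{k},1/k}(\mu_{0})\geq0$ (by the previous paragraph) and $B(\mu):=\log t_{1/k}(\mu)-\log t_{1/k}(\mu_{0})\geq0$. Nonnegativity of $B$ holds because $\log(1/t_{1/k})$ is concave — its second derivative is the strictly negative quantity of Lemma~\ref{lem:maxglobal} — with $\mu_{0}$ a stationary point ($t_{1/k}'(\mu_{0})=0$, as in the proof of Lemma~\ref{lem:PTstat}), so $t_{1/k}$ is minimized at $\mu_{0}$. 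Both $A,B$ are explicit: with $E(\mu):=(1-\mu)\log(1-\mu)+2\mu\log\mu+(k-1-\mu)\log(k-1-\mu)$ one has $kB(\mu)=E(\mu)-E(\mu_{0})$ with $E(\mu_{0})=2(k-1)\log(k-1)-k\log k$, and, using $k-1-\mu_{0}=k(1-\mu_{0}+\mu_{0}^{2})-1=(k-1)^{2}/k$, $A(\mu)=(k-2)\log\frac{k(k-1-\mu)}{(k-1)^{2}}+\log\frac{k\,(k(1-\mu+\mu^{2})-1)}{(k-1)^{2}}$. The goal has become the one–variable inequality $\frac{\log k}{k}A(\mu)\leq B(\mu)$ on $[0,1]$, for every integer $k\geq3$.

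I would then study $\Phi$ directly. It vanishes together with $\Phi'$ at $\mu_{0}$, and $\Phi''(\mu_{0})<0$: a short computation gives $A''(\mu_{0})=(k/(k-1))^{3}$ and $B''(\mu_{0})=(k/(k-1))^{2}$, so $\Phi''(\mu_{0})=B''(\mu_{0})\bigl(\tfrac{\log k}{k-1}-1\bigr)<0$ because $\log k<k-1$ for $k\geq3$; thus $\mu_{0}$ is a strict local maximum. Moreover $\Phi'(\mu)\to+\infty$ as $\mu\to0^{+}$ and $\Phi'(\mu)\to-\infty$ as $\mu\to1^{-}$ (both from $-(\log t_{1/k})'$), and a direct check gives $\Phi'<0$ on $(\mu_{0},1)$, so $\Phi$ decreases on $[\mu_{0},1]$ and $\mu_{0}$ beats every point there. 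On $[0,\mu_{0}]$ things are more delicate, since $\log\Gamma_{1_{k},\frac{1}{k},\log k/k}$ fails to be concave on $[0,1]$ for larger $k$; however, counting the sign changes of $\Phi''$ shows that $\Phi|_{[0,\mu_{0}]}$ can have at most one further local maximum $\alpha$, and solving $\Phi'(\alpha)=0$, i.e.\ $\log k\,\bigl|(\log G_{1_{k},1/k})'(\alpha)\bigr|=\log\frac{(1-\alpha)(k-1-\alpha)}{\alpha^{2}}$, locates it at scale $\alpha$ of order $k^{-1/2}$. It then remains to check $\Phi(\alpha)\leq0$. Using $A(\alpha)\leq\max_{\mu}A(\mu)=A(0)=(k-1)\log\frac{k}{k-1}<1$ (the maximum is at $0$ since $\log G_{1_{k},1/k}(0)-\log G_{1_{k},1/k}(1)=(k-2)\log\frac{k-1}{k-2}>0$), together with $kB(\alpha)\geq kB(0)-\bigl(E(0)-E(\alpha)\bigr)=k\log k-(k-1)\log(k-1)-o(1)$ coming from the location of $\alpha$, the inequality $\frac{\log k}{k}A(\alpha)\leq B(\alpha)$ follows.

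The heart of the proof, and the main obstacle, is this last estimate. Since $\log k/k$ is asymptotically optimal for $1/k$–solutions (Remark~\ref{best}), the inequality $\frac{\log k}{k}A(\mu)\leq B(\mu)$ has only an $O(1/k)$ margin near $\alpha$, so a soft argument will not suffice: one needs reasonably sharp two–sided control of $E(\mu)-E(\mu_{0})$ and of $A$ in that range. In practice I would carry these asymptotics out carefully for $k$ large and settle the finitely many remaining small $k$ by direct verification of $\frac{\log k}{k}A(\mu)\leq B(\mu)$ on $[0,1]$; equivalently, one may verify region by region the single transcendental inequality $\log k\,\bigl|(\log G_{1_{k},1/k})'(\mu)\bigr|\leq\bigl|\log\frac{(1-\mu)(k-1-\mu)}{\mu^{2}}\bigr|$, from which the sign of $\Phi'$, and hence the whole monotonicity structure, can be read off.
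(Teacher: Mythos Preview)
Your reduction to $r=\log k/k$ via monotonicity in $r$ is a clean first step the paper does not make explicit, and your computation of $(\log G_{1_k,1/k})'(\mu)$, showing that $G_{1_k,1/k}$ is minimized at $\mu_{0}=1-1/k$, is correct and useful. The overall strategy---analyze $\Phi$ through its critical-point structure, locate a possible secondary maximum $\alpha$ on $[0,\mu_{0}]$, then estimate $\Phi(\alpha)$---is a genuinely different route from the paper's. The paper instead splits $[0,1]$ into $[0,1/2]$ and $[1/2,1]$: on $[1/2,1]$ it exploits the concavity of the derivatives $(\log G_{1_k,1/k})'$ and $(-k\log t_{1/k})'$, bounding them between their chord and tangent lines through $\mu_{0}$ to force $\Phi'\geq0$ on $[1/2,\mu_{0}]$ and $\Phi'\leq0$ on $[\mu_{0},1]$; on $[0,1/2]$ it uses a two-step bootstrap with the explicit breakpoint $a_{1}=0.15$, replacing $1-\mu+\mu^{2}$ by a constant on each subinterval.

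The proposal, however, is incomplete precisely where the difficulty lies. The assertion that $\Phi'<0$ on $(\mu_{0},1)$ is left to a ``direct check'' with no argument; your local computation $\Phi''(\mu_{0})<0$ does not by itself propagate, and the paper needs its tangent-line bound here. More seriously, on $[0,\mu_{0}]$ you state without proof that $\Phi''$ has a controlled sign pattern giving at most one further local maximum, that this extra critical point $\alpha$ sits at scale $k^{-1/2}$, and that the resulting estimate $\frac{\log k}{k}A(\alpha)\leq B(\alpha)$ then follows; you yourself defer the actual work to ``carry[ing] these asymptotics out carefully for $k$ large'' plus a finite check for small $k$. None of these steps is executed, and since the margin at $\alpha$ is only $O(1/k)$, each requires genuine analysis---the sign pattern of $\Phi''$, for instance, involves a rational function in $\mu$ that is not obviously tame. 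As written this is a plausible plan rather than a proof; completing it would require roughly the same amount of hands-on estimation as the paper's interval-by-interval argument.
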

 \begin{proof}
We give here just an outline of the proof. A detailed one is given in the Appendix \ref{detail_maxim}. 

The interval of $\mu$ is divided into two parts $[0,1/2]$ and $[1/2,1]$ where the function $\Gamma_{1_{k},\frac{1}{k},r}$ is bounded from above using two different techniques.
%\begin{itemize}

 First for $\mu \in [0,1/2]$: In this interval, we use mainly the fact that for some $a \in [0,1/2]$, $1-\mu-\mu^2\leq l_a=1-a-a^2$ for any $\mu \in [a,1/2]$. Let $$\tau_{a}\left(\mu\right)=\log k\,\log\left(k^{1-k}\left(k-1-\mu\right)^{k-2}\left(k l_a-1\right)\right)-k\log t_{1/k}\left(\mu\right)$$ $\tau_{a}\left(\mu\right)$ bounds from above $k\,\log\Gamma_{1_{k},\frac{1}{k},\frac{\log k}{k}}\left(\mu\right)$ in the interval $[a,1/2]$. It can be shown that $\tau_{a}\left(\mu\right)$ is strictly increasing in the above interval. Beginning with $a_0=0$, we find a value $a_1$ such that $\tau_{a_0}\left(a_1\right) < 2 k \log \gamma_{1_{k},r}\left(1/k\right)$ proving the desired inequality for $\mu \in [a_0,a_1]$. We repeat the same with $\tau_{a_1}$ and find a $a_2$ and so on... until an $a_i \geq 1/2$ which finishes this part of the proof. In fact, only two steps are sufficient with $a_1=0.15$.

 Second for $\mu \in [1/2,1]$: Recall that $k\log\Gamma_{1_{k},\frac{1}{k},\frac{\log k}{k}}\left(\mu\right)=\log k\,\log G_{1_{k},\frac{1}{k}}\left(\mu\right)-k\log t_{1/k}\left(\mu\right)$. We prove first separately that the derivatives of $\log G_{1_{k},\frac{1}{k}}\left(\mu\right)$ and $-\log t_{1/k}\left(\mu\right)$ are concave in the whole considered interval. Then we split the above interval in two parts $]1/2,1-1/k[$ and $]1-1/k,1]$. Considering their concavity, both functions can be bound from below in the first interval by the linear functions representing the chords joining the two points corresponding to the two bounds of the interval. The sum of these two linear functions being positive, this proves that the derivative of $k\log\Gamma_{1_{k},\frac{1}{k},\frac{\log k}{k}}\left(\mu\right)$ is positive in the first interval and then that the value of the function at $\mu=1-1/k$ is maximum. For the second interval, the functions are bounded from above by the linear functions representing the tangent lines at $\mu=1-1/k$. The sum of these two linear functions being negative, the derivative of $k\log\Gamma_{1_{k},\frac{1}{k},\frac{\log k}{k}}\left(\mu\right)$ is negative in the second interval and then $\mu=1-1/k$ is also the maximum in the second interval. Summing up, $\Gamma_{1_{k},\frac{1}{k},\frac{\log k}{k}}\left(1-1/k\right)=\gamma_{1_{k},r}\left(1/k\right)^{2}$ is the maximum of $\Gamma_{1_{k},\frac{1}{k},\frac{\log k}{k}}\left(\mu\right)$ within  $[1/2,1]$.
%\end{itemize} 

 \end{proof}
\subsection{A general upper bound for positive $1$-in-$k$-SAT}

$X$ is the random variable associating to each $1_{k}(m,n)$ the
number of its solutions.. We have:

\[
\mathbf{E}X=\sum_{p=0}^{n}{n \choose p}\left(k\frac{p}{n}\left(1-\frac{p}{n}\right)^{k-1}\right)^{rn}\sim\left(\max_{\delta\in[0,1]}\left(\gamma_{1_{k},r}\left(\delta\right)\right)\right)^{n}poly(n)\]

For the upper bound, we prove that for $k\geq7$ and $r=\log^{2}k/k$,
$\max_{\delta\in[0,1]}\left(\gamma_{1_{k}}\left(\delta\right)\right)<1$.
This is the purpose of the following Fact. 
\begin{fact}
for $k\geq7$ , $\max_{\delta\in[0,1]}\left(\gamma_{1_{k},\log^{2}k/k}\left(\delta\right)\right)<1$.\end{fact}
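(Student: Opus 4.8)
Taking logarithms, the Fact is equivalent to showing that $\log\gamma_{1_k,r}(\delta)=r\log g_{1_k}(\delta)+H(\delta)<0$ for every $\delta\in(0,1)$, where $r=\log^2k/k$, $g_{1_k}(\delta)=k\delta(1-\delta)^{k-1}$ and $H(\delta)=-\delta\log\delta-(1-\delta)\log(1-\delta)$; the endpoints $\delta=0,1$ are trivial since $g_{1_k}$ vanishes there. Because $g_{1_k}$ attains its maximum on $[0,1]$ at $\delta=1/k$ with value $(1-1/k)^{k-1}<1$, we have $-\log g_{1_k}(\delta)>0$ throughout $(0,1)$, so after dividing the desired inequality reads
\[
\frac{H(\delta)}{-\log g_{1_k}(\delta)}<\frac{\log^2 k}{k}\qquad(\delta\in(0,1)),
\]
which, by Remark \ref{best}, is exactly the assertion that $\hat r_{1_k,\delta}<\log^2k/k$ for every $\delta$.

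So the plan is to bound $\sup_{\delta\in(0,1)}\hat r_{1_k,\delta}$. Substituting $\delta=x/k$ and expanding, one finds that for $x$ in a fixed compact subset of $(0,\infty)$,
\[
k\,\hat r_{1_k,x/k}=\frac{k\,H(x/k)}{-\log g_{1_k}(x/k)}=\frac{x}{x-\log x}\,\log k+O(1),
\]
uniformly in $x$; since $\max_{x>0}\frac{x}{x-\log x}=\frac{e}{e-1}$ (attained at $x=e$), while $\hat r_{1_k,\delta}\to0$ both as $\delta\to0$ and as $\delta\to1$, the supremum is of order $\frac{e}{e-1}\cdot\frac{\log k}{k}$, comfortably below $\log^2 k/k$ once $k$ is large. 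To turn this into a proof across all of $(0,1)$ I would split the range of $\delta$: for $\delta$ bounded away from $0$ (say $\delta\ge 1/2$) one uses $-\log(1-\delta)\ge\log 2$, so $-\log g_{1_k}(\delta)$ is of order $k$ while $H(\delta)\le\log 2$, and the ratio is $O(1/k)$; for the remaining, small-$\delta$ range one uses the Taylor estimates $-\log(1-\delta)\ge\delta+\delta^2/2$ and $H(\delta)\le -\delta\log\delta+\delta$ to reduce $\hat r_{1_k,\delta}<\log^2k/k$ to an elementary one-variable inequality $\tilde h_k(x)<0$ in $x=k\delta$, where $\tilde h_k$ is readily checked to be concave ($\tilde h_k''<0$), so that it suffices to control its value at its unique critical point, which can be located explicitly up to lower-order terms.

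The main obstacle is that the inequality is genuinely close to tight for small $k$: at $k=7$ one has $\sup_\delta\hat r_{1_7,\delta}\approx 0.50$ against $\log^2 7/7\approx 0.54$, and in fact the analogous statement is \emph{false} for $k=6$ (there $\sup_\delta\hat r_{1_6,\delta}\approx 0.55>\log^2 6/6\approx 0.535$), so any estimate with too much slack will not close the gap near the boundary $k=7$. The clean way around this is to treat the finitely many small cases $7\le k\le k_0$ separately, by directly maximizing the explicit smooth function $\gamma_{1_k,\log^2k/k}$ on $[0,1]$ — for each such $k$ this maximum is a concrete number strictly below $1$ — and to run the estimates above for $k>k_0$, where the gap between $\frac{e}{e-1}\log k$ and $\log^2 k$ grows without bound. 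A fully uniform analytic argument valid down to $k=7$ is possible by retaining one further term in each of the Taylor bounds used in the small-$\delta$ region, but the bookkeeping is heavier; in any case the difficulty here is purely quantitative rather than structural.
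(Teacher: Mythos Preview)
Your reformulation via $\hat r_{1_k,\delta}=H(\delta)/\bigl(-\log g_{1_k}(\delta)\bigr)$ is correct, and the asymptotic $k\,\hat r_{1_k,x/k}=\frac{x}{x-\log x}\log k+O(1)$ (with maximum $\frac{e}{e-1}$ at $x=e$) is both accurate and illuminating: it explains why the statement is comfortable for large $k$ and why it is false at $k=6$. The bound on $[1/2,1]$ is also fine.

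What you have submitted, however, is an outline rather than a proof. The decisive small-$\delta$ range is left at the level of ``reduces to an inequality $\tilde h_k(x)<0$ with $\tilde h_k$ readily checked to be concave'': no $\tilde h_k$ is written down, the concavity is not verified, and no bound at the critical point is produced. You yourself note that at $k=7$ the margin is under $10\%$ and that the inequality fails at $k=6$; this means the omitted constants genuinely matter, and the remaining steps cannot be declared routine. Deferring $7\le k\le k_0$ to a finite check is legitimate, but then $k_0$ and the uniform estimate for $k>k_0$ must actually be given --- neither is.

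The paper takes a different, fully explicit route that avoids asymptotics and case-checking altogether. It splits $[0,1]$ into $[0,1/k]$, $[1/k,1/2]$, $[1/2,1]$. On the two outer intervals both factors of $\gamma_{1_k,r}$ are monotone, so one only evaluates at the inner endpoints $1/k$ and $1/2$. On the middle interval the paper uses the concavity of $H(\delta)$ to bound the entropy factor above by its tangent line at $1/k$, which yields an explicit bound below $1$ on $[1/k,s]$ for a concrete $s=s(k)$; on $[s,1/2]$ it uses the concavity of $\log g_{1_k}$ to bound it above by its tangent line at $s$. Each resulting inequality is a closed-form condition in $k$ that holds for all $k\ge 7$, so no separate treatment of small $k$ is required. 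Your asymptotic viewpoint is more informative about \emph{why} the bound holds; to turn it into a proof you would need exactly the kind of $k$-uniform, explicit estimates that the paper's tangent-line argument supplies.
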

\begin{proof}
We prove it first in the interval $\delta\in[1/2,1]$. Both $g_{1_{k}}\left(\delta\right)^{r}$
and $\frac{1}{\delta^{\delta}\left(1-\delta\right)^{1-\delta}}$ are
decreasing in $\delta$ in this interval. $\gamma_{1_{k},r}\left(1/2\right)=2g_{1_{k}}\left(1/2\right)^{r}=2\left(\frac{k}{2^{k}}\right)^{\log^{2}k/k}<1$
then $\gamma_{1_{k},\log^{2}k/k}\left(\delta\right)<1$ in the interval
$\delta\in[1/2,1]$ . $g_{1_{k}}\left(\delta\right)=k\delta\left(1-\delta\right)^{k-1}$
increases from $0$ until $\delta=1/k$. In the same interval $\delta^{\delta}\left(1-\delta\right)^{1-\delta}$
is decreasing. Then $\gamma_{1_{k},\log^{2}k/k}\left(\delta\right)=\frac{g_{1_{k}}\left(\delta\right)^{\log^{2}k/k}}{\delta^{\delta}\left(1-\delta\right)^{1-\delta}}\leq\frac{g_{1_{k}}\left(1/k\right)^{\log^{2}k/k}}{\left(1-1/k\right)^{1-1/k}\left(1/k\right)^{1/k}}\leq ke^{\frac{\text{(-1+k) (-1+\ensuremath{\log}k) (1+\ensuremath{\log}k})}{k}}<1$
for $k\geq7$. 

It remains to handle the function within the interval $[1/k,1/2]$.
Since $\log\frac{1}{\delta^{\delta}\left(1-\delta\right)^{1-\delta}}$
is concave it can be bound from above by the line of slope its derivative
$\frac{1}{\delta^{\delta}\left(1-\delta\right)^{1-\delta}}\leq(-1+k)^{-1+\delta}k$.
$\frac{g_{1_{k}}\left(\delta\right)^{\log^{2}k/k}}{\delta^{\delta}\left(1-\delta\right)^{1-\delta}}\leq(-1+k)^{-1+\delta}k\, g_{1_{k}}\left(\delta\right)^{\log^{2}k/k}$.
$(-1+k)^{-1+\delta}k\, g_{1_{k}}\left(1/k\right)^{\log^{2}k/k}$ is
less than $1$ within $[1/k,s]$ where $s=-\log\left(1-1/k\right)\left(\left(k-1\right)\log\left(k\right)^{2}-k\right)/(k\log\left(k-1\right))$.
Finally, we bound from above $\log g_{1_{k}}\left(\delta\right)$
by $\log g_{1_{k}}^{\prime}\left(s\right)\left(\delta-s\right)+\log g_{1_{k}}\left(s\right)$.
The upper bound is less than $1$ for $k>7$ in $[s,1/2]$. 
\end{proof}
The application of Markov inequality finishes the proof of the upper
bound.

\section{\label{sec:Discussion-and-conclusion}Discussion}

%We have shown that for a class of random boolean $CSP$, a lower bound
%of the satisfiability threshold can be obtained through the second
%method through characteristic solutions. This allows to compute a
%tight lower bound for general positive $1$-in-$k$-SAT.

Any element of $\Delta_{I_k}$ is necessary and sufficient to make the second moment method to be successful as stated by theorem \ref{lem:secondmoment}.  An interesting question that is raised by the fact that $\Delta_{I_k}$ may have many values is : what value gives the better lower bound? Precisely, is there a simple criterion that permits to select the $\delta \in \Delta_{I_k}$ that gives the best lower bound? 

In the example of Figure \ref{getPM}, the function $g_{I_{13}}$ is represented for  $I_{13}=\{1,8,12 \}$. It has three local maxima and so $\Delta_{I_{13}}=\{\delta_1,\delta_2,\delta_3\}$ with $g_{I_{13}}(\delta_2) < g_{I_{13}}(\delta_1) < g_{I_{13}}(\delta_3)$. As said before, the second moment method succeeds only for those three values of $\delta$. An immediate candidate for this choice of the best value could be $\delta_3$ since it is the one for which the probability of satisfying a randomly selected constraint is maximum.  In fact, the best lower bound is obtained using $\delta_2$. The latter is the one that maximizes the first moment of $X_\delta$ i.e. that corresponds to $max_{\delta \in \Delta_{I_k}} \left (\gamma_{I_k}(\delta)\right)$. Since $\gamma_{I_k}(\delta)=\delta^{-\delta} \left( 1-\delta \right)^{\delta-1} g_{I_{k}}(\delta)$, the entropy term $\delta^{-\delta} \left( 1-\delta \right)^{\delta-1}$ centered on $1/2$ tends to favor values of $\delta$ near $1/2$.  We have verified this fact for many problems. We conjecture that for any problem defined by  the set $I_k$, the best value of $\delta$ for the second moment method is the $\delta^* \in \Delta_{I_k}$ that maximizes $\gamma_{I_k}(\delta)$. 

\begin{figure}[h]
\includegraphics[width=8cm]{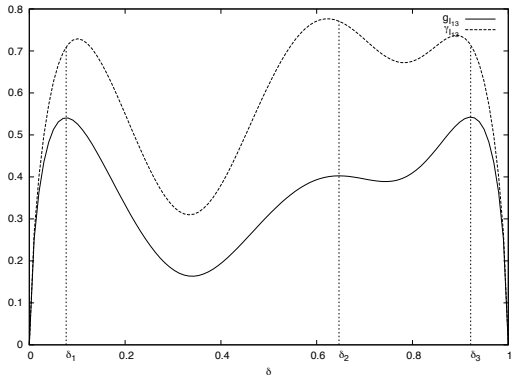}
\caption{An example of the functions $g_{I_k}$ and $\gamma_{I_k}$ for $I_{13}=\{1,8,12 \}$ for $r=0.64$. }
\label{getPM}
\end{figure}

%Note that the notion of characteristic solutions is defined with respect
%to a parameter which is the proportion of $1$'s. As we have shown,
%this parameter is suitable for the class of boolean $CSP$ considered.
%A perspective of this work is to find some other parameter like for
%example taking into account the number of occurrences of variables
%and to prove that there is also characteristic values for it that
%makes the second moment to give better lower bounds.

\bibliographystyle{plain}
\bibliography{library}

\newpage
\begin{appendix}

\section{Proof of the equality (\ref{giketg})}
\label{eqgik}
\begin{proof}
Considering (\ref{GIK}) :
\begin{eqnarray*}
G_{I_k,\delta}^\prime\left(1-\delta\right)&=&\sum_{i\in I_{k}}\sum_{j\in I_{k}}\phi_{i,j,\delta}^\prime\left(1-\delta\right)\\
&=&\sum_{i\in I_{k}}\sum_{j\in I_{k}}\sum_{d}{i \choose d}{k-i \choose j-d} \kappa_{i,j,d,\delta}^{\prime}\left(1-\delta\right)\\
\end{eqnarray*}
Recall that $\kappa_{i,j,d,\delta}\left(\mu\right)=\left(\left(1-\mu\right)\delta\right)^{d}\left(\mu\delta\right)^{i+j-2d}\left(\left(1-\delta-\mu\delta\right)\right)^{k-i-j+d}$ and then
\begin{equation*}
\kappa_{i,j,d,\delta}^{\prime}\left(\mu\right)=\kappa_{i,j,d,\delta}\left(\mu\right)\left(-\frac{d}{1-\mu}+\frac{i+j-2d}{\mu}-\frac{\delta\left(k-i-j+d\right)}{\left(1-\delta-\delta\mu\right)}\right)\end{equation*} 

Noting that $\kappa_{i,j,d,\delta}\left(1-\delta\right)=\delta^{i+j}\left(1-\delta\right)^{2k-i-j}$ 
we get:

$\phi_{i,j,\delta}^{\prime}\left(1-\delta\right)=\delta^{i+j}\left(1-\delta\right)^{2k-i-j}{k \choose i}\sum_{d}{i \choose d}{k-i \choose j-d}\left(-\frac{d}{\delta}+\frac{i+j-2d}{1-\delta}-\frac{\delta\left(k-i-j+d\right)}{\left(1-\delta\right)^{2}}\right)$.

Using the mean of the hypergeometric distribution of parameters $k$, $i$ and $j$ ($\sum_{d=0}^{j}d{i \choose d}{k-i \choose j-d}/{k \choose j}=\frac{ij}{k}$)
and Vandermonde identity, we get:

$\phi_{i,j,\delta}^{\prime}\left(1-\delta\right)=\delta^{i+j}\left(1-\delta\right)^{2k-i-j}{k \choose i}{k \choose j}\left(-\frac{ij}{k\delta}+\frac{i+j-2ij/k}{1-\delta}-\frac{\delta\left(k-i-j+ij/k\right)}{\left(1-\delta\right)^{2}}\right)$. 

Denoting the quantity $h_{I_{k}}\left(\delta\right)=\sum_{i\in I}{\displaystyle i{k \choose i}\delta^{i}\left(1-\delta\right)^{k-i}}$
\begin{eqnarray*}
\sum_{i\in I_{k}}\sum_{j\in I_{k}}\delta^{i+j}\left(1-\delta\right)^{2k-i-j}{k \choose i}\sum_{d}{i \choose d}{k-i \choose j-d} i j&=&h_{I_{k}}\left(\delta\right)^{2}\\
\sum_{i\in I_{k}}\sum_{j\in I_{k}}\delta^{i+j}\left(1-\delta\right)^{2k-i-j}{k \choose i}\sum_{d}{i \choose d}{k-i \choose j-d} i j&=&h_{I_{k}}\left(\delta\right) g_{I_{k}}\left(\delta\right)\\
\sum_{i\in I_{k}}\sum_{j\in I_{k}}\delta^{i+j}\left(1-\delta\right)^{2k-i-j}{k \choose i}\sum_{d}{i \choose d}{k-i \choose j-d} &=&g_{I_{k}}\left(\delta\right)^{2}
\end{eqnarray*}

%Since $\delta$ is a stationary point, $g_{I_{k}}^{\prime}\left(\delta\right)=0$. We have the following relations :
%\begin{eqnarray*}
%g_{I_{k}}^{\prime}\left(\delta\right)&=&\sum_{i\in I}{k \choose i}\,\delta^{i}\left(1-\delta\right)^{k-i}\left(\frac{i}{\delta}-\frac{k-i}{1-\delta}\right)\\
%&=&\left(\frac{1-\delta}{\delta}+1\right)\sum_{i\in I}i\,{k \choose i}\,\delta^{i}\left(1-\delta\right)^{k-i}-k\, g_{I_{k}}\left(\delta\right)\\
%&=&0
%\end{eqnarray*}
%
%\begin{eqnarray*}
%g_{I_{k}}^{\prime}\left(\delta\right)&=&\sum_{i\in I}{k \choose i}\,\delta^{i}\left(1-\delta\right)^{k-i}\left(\frac{i}{\delta}-\frac{k-i}{1-\delta}\right)\\
%&=&\left(\frac{1-\delta}{\delta}+1\right)\sum_{i\in I}i\,{k \choose i}\,\delta^{i}\left(1-\delta\right)^{k-i}-k\, g_{I_{k}}\left(\delta\right)\\
%&=&0
%\end{eqnarray*}
%
%
%It follows that:
%
%\begin{equation}
%h_{I_{k}}\left(\delta\right)=\sum_{i\in I}{\displaystyle i{k \choose i}\delta^{i}\left(1-\delta\right)^{k-i}}=k\,\delta\, g_{I_{k}}\left(\delta\right)\label{eq:ID_HI_GI}\end{equation}
%
%
%Notice that: 
%
%\begin{equation}
%\sum_{i\in I_{k}}\sum_{j\in I_{k}}\delta^{i+j}\left(1-\delta\right)^{2k-i-j}{k \choose i}{k \choose j}ij=\left(\sum_{i=0}^{k}a_{i}\delta^{i}\left(1-\delta\right)^{k-i}{k \choose i}i\right)^{2}=h_{I_{k}}\left(\delta\right)^{2}\label{eq:H_I}\end{equation}
% 

%and \begin{eqnarray}
%\sum_{i\in I_{k}}\sum_{j\in I_{k}}\delta^{i+j}\left(1-\delta\right)^{2k-i-j}{k \choose i}{k \choose j}i & = & h_{I_{k}}\left(\delta\right)g_{I_{k}}\left(\delta\right)\label{eq:H_IG_I}\end{eqnarray}
\begin{eqnarray*}
 G_{I_{k},\delta}^{\prime}\left(1-\delta\right)&=&\sum_{i\in I_{k}}\sum_{j\in I_{k}}\delta^{i+j}\left(1-\delta\right)^{2k-i-j}{k \choose i}{k \choose j}\left(-\frac{ij}{k\delta}+\frac{i+j-2ij/k}{1-\delta}-\frac{\delta\left(k-i-j+ij/k\right)}{\left(1-\delta\right)^{2}}\right)\\
 &=&-\frac{h_{I_{k}}\left(\delta\right)^{2}}{k\delta}+\frac{2h_{I_{k}}\left(\delta\right)g_{I_{k}}\left(\delta\right)-2 h_{I_{k}}\left(\delta\right)^{2}/k}{1-\delta}\\
 & &-\frac{\delta\left(kg_{I_{k}}^{2}\left(\delta\right)-2h_{I_{k}}\left(\delta\right)g_{I_{k}}\left(\delta\right)+h_{I_{k}}\left(\delta\right)^{2}/k\right)}{\left(1-\delta\right)^{2}}\\
 &=&-\frac{\left(h_{I_k}(\delta)-k \delta g_{I_k}(\delta)\right)^2}{\delta(1-\delta)^2}
 \end{eqnarray*}

Noting that because of:
\begin{eqnarray*}
g_{I_{k}}^{\prime}\left(\delta\right)&=&\sum_{i\in I}{k \choose i}\,\delta^{i}\left(1-\delta\right)^{k-i}\left(\frac{i}{\delta}-\frac{k-i}{1-\delta}\right)=\frac{1}{\delta}h_{I_k}(\delta)-k\, g_{I_{k}}\left(\delta\right)
\end{eqnarray*}
$h_{I_k}(\delta)-k \delta g_{I_k}(\delta)=\delta g_{I_{k}}^{\prime}\left(\delta\right)$ allowing for the desired relation.

\end{proof}

\section{Proof of Lemma \ref{lem:maxglobal}}
\label{conc}
\begin{proof}
The  second derivative of $\log\left(\Gamma_{I_{k},\delta,r}\left(\mu\right)\right)$
is: 

\begin{eqnarray*}
\left(\log\left(\Gamma_{I_{k},\delta,r}\left(\mu\right)\right)\right)^{\prime\prime} &=& \left(-\frac{\delta}{1-\mu}-\frac{2\delta}{\mu}-\frac{\delta^{2}}{1-\delta-\delta\mu}\right)+r \left(\log\left(G_{I_{k},\delta}\left(\mu\right)\right)\right)^{\prime\prime}\\
&=&\left(-\frac{\delta}{1-\mu}-\frac{2\delta}{\mu}-\frac{\delta^{2}}{1-\delta-\delta\mu}\right)+r\frac{G_{I_{k},\delta}\left(\mu\right)G_{I_{k},\delta}^{\prime\prime}\left(\mu\right)-G_{I_{k},\delta}^{'}\left(\mu\right)^{2}}{G_{I_{k},\delta}\left(\mu\right)^{2}}
\end{eqnarray*}

It is easy to check that the second derivative of $-\frac{\delta}{1-\mu}-\frac{2\delta}{\mu}-\frac{\delta^{2}}{1-\delta-\delta\mu}$ is negative and that its derivative tends to $\infty$ when $\mu$ tends to $0$ and to $-\infty$ on the other side then $-\frac{\delta}{1-\mu}-\frac{2\delta}{\mu}-\frac{\delta^{2}}{1-\delta-\delta\mu}$
increases from $-\infty$ attains a maximum at a negative value
then decreases to $-\infty$. Let $-\rho$ ($\rho>0$) be its maximum
value.

In the second part $G_{I_{k},\delta}\left(\mu\right)^{2}$ is bounded
and strictly positive. Indeed it is formed by a sum of positive terms
some of which are strictly positive. Indeed all $\kappa_{i,j,d,\delta}\left(\mu\right)>0$
for every $\mu\in]0,min\left(1,\frac{1-\delta}{\delta}\right)[$.
Moreover, $ $$\kappa_{i,i,i,\delta}\left(0\right)>0$ and if $\delta\leq1/2$
$\kappa_{i,i,0,\delta}\left(1\right)>0$ otherwise $ $$\kappa_{i,i,2i-k,\delta}\left(\left(1-\delta\right)/\delta\right)>0$.

$G_{I_{k},\delta}\left(\mu\right)G_{I_{k},\delta}^{\prime\prime}\left(\mu\right)-G_{I_{k},\delta}^{'}\left(\mu\right)^{2}$
is a polynomial in $\mu$. It is also bounded for $\mu\in[0,min\left(1,\frac{1-\delta}{\delta}\right)]$.
The second part have no singular point and it it bounded. Le $\nu$ be its maximum value. We prove that $\nu>0$. We know thanks
to Lemma \ref{lem:PTstat} that $G_{I_{k},\delta}^{\prime}\left(1-\delta\right)=0$.
Moreover the second derivative $G_{I_{k},\delta}^{\prime\prime}\left(1-\delta\right)=\frac{\delta^{2}g_{I_k}^{\prime\prime}\left(\delta\right)^{2}}{k\left(k-1\right)}$
and as seen before $G_{I_{k},\delta}\left(1-\delta\right)=g_{I_k}\left(\delta\right)^2$. We deduce
that $\nu>0$. Indeed, $$\nu\geq \frac{G_{I_{k},\delta}\left(1-\delta\right)G_{I_{k},\delta}^{\prime\prime}\left(1-\delta\right)-G_{I_{k},\delta}^{'}\left(1-\delta\right)^{2}}{G_{I_{k},\delta}\left(1-\delta\right)^2}=\frac{\delta^{2}g_{I_k}^{\prime\prime}\left(\delta\right)^{2}}{k\left(k-1\right)g_{I_k}\left(\delta\right)^{2}}>0$$.
Finally :

\[
\left(-\frac{\delta}{1-\mu}-\frac{2\delta}{\mu}-\frac{\delta^{2}}{1-\delta-\delta\mu}\right)+r\frac{G_{I_{k},\delta}\left(\mu\right)G_{I_{k},\delta}^{\prime\prime}\left(\mu\right)-G_{I_{k},\delta}^{'}\left(\mu\right)^{2}}{G_{I_{k},\delta}\left(\mu\right)^{2}}\leq-\rho+r.\nu\]

The second derivative if then negative over $[0,min\left(1,\frac{1-\delta}{\delta}\right)]$
for every $r<r^*_{I_k}=\rho/\nu$.
\end{proof}
\section{Detailed proof of Lemma \ref{maxim}}
\label{detail_maxim}
\begin{proof}
%Since it is a particular case of the
%general case studied in Section \ref{sec:General-case}, it is unnecessary
%to make all the analysis and in particular Laplace. It is sufficient
%to prove that the exponential term is equal to 1 i.e. to prove that
%$\Gamma_{1_{k},\frac{1}{k},\frac{\log k}{k}}\left(\mu\right)/\gamma_{1_{k},\frac{\log k}{k}}\left(1/k\right)^{2}\leq1$.
%In general, this function is no more concave as in the precedent proof. It requires to prove that it is less that one separately for intervals of $ $$\mu$, $[0,0.15]$,
%$[0.15,0.5]$, $[0.5,1-1/k[$ and for $[1-1/k,1]$. This is because
%we use different bounds for each interval. 

${\bf \mu\in[0,1/2]}$: 
%Since $\left(1-\mu+\mu^{2}\right)\leq1$,
%$G_{1_{k},1/k}\left(\mu\right)\leq k^{1-k}\left(k-1-\mu\right)^{k-2}\left(k-1\right)$.
%Denoting by: $\tau_{1}\left(\mu\right)=\log k\,\log\left(k^{1-k}\left(k-1-\mu\right)^{k-2}\left(k-1\right)\right)-k\log t_{1/k}\left(\mu\right)$,
%we have then:
%$$k\,\log\Gamma_{1_{k},\frac{1}{k},\frac{\log k}{k}}\left(\mu\right)\leq\tau_{1}\left(\mu\right)$$
For $\mu\in[0,1]$he second derivative of $-\log t_{1/k}\left(\mu\right)$
is negative and so is the second derivative of $\log\left(k-1-\mu\right)$.
This permits to conclude that $\tau_{a}^{\prime}\left(\mu\right)$
is decreasing. $\tau_{a}^{\prime}\left(1/2\right)=\log(2\, k-3)-(2(k-2)\log\left(k\right))/(2k-3)>0$
for every $k>3$ . So $\tau_{a}^{\prime}\left(\mu\right)>0$ for $\mu\in[0,1/2]$
and then $\tau_{a}\left(\mu\right)$ is strictly increasing in the
same interval. 

It is easy to check that $\tau_{0}\left(0.15\right)-2\, k\,\log\left(\gamma_{1_{k},\frac{\log k}{k}}\left(\frac{1}{k}\right)\right)<0$
for every $k>3$. Consequently  $\log \Gamma_{1_{k},\frac{1}{k},\frac{\log k}{k}}\left(\mu\right)<2\,\log\left(\gamma_{1_{k},\frac{\log k}{k}}\left(\frac{1}{k}\right)\right)$ for every $\mu \in [0,0.15]$.
 
Similarly  $\tau_{0.15}\left(0.5\right)-2\, k\,\log\left(\gamma_{1_{k},\frac{\log k}{k}}\left(1/k\right)\right)<0$ for any $k>3$. Concluding that $$\log \Gamma_{1_{k},\frac{1}{k},\frac{\log k}{k}}\left(\mu\right)<2\,\log\left(\gamma_{1_{k},\frac{\log k}{k}}\left(\frac{1}{k}\right)\right)$$ in the interval $[0,\frac{1}{2}]$.

%Then $\log\Gamma_{1_{k},\frac{1}{k},\frac{\log k}{k}}\left(\mu\right)<2\,\log\left(\gamma_{1_{k},\frac{\log k}{k}}\left(\frac{1}{k}\right)\right)$
%for $\mu\in[0,0.15]$.
%
%${\bf \mu\in[0.15,0.5]}$: In this interval $\left(1-\mu+\mu^{2}\right)\leq0.8275$.
%Similarly to the previous interval, denote by $\tau_{2}\left(\mu\right)=\log k\,\log\left(k^{1-k}\left(k-1-\mu\right)^{k-2}\left(0.8275\, k-1\right)\right)-k\log t_{1/k}\left(\mu\right)$.
%So $k\log\Gamma_{1_{k},\frac{1}{k},\frac{\log k}{k}}\left(\mu\right)\leq\tau_{2}\left(\mu\right)$.
%The derivatives of $\tau_{2}\left(\mu\right)$ and those of $\tau_{1}\left(\mu\right)$
%are the same then $\tau_{2}\left(\mu\right)$ is also increasing in
%$\mu\in[0.15,1/2]$. Similarly $\tau_{2}\left(0.5\right)-2\, k\,\log\left(\gamma_{1_{k},\frac{\log k}{k}}\left(1/k\right)\right)<0$.
%Then $\log\Gamma_{1_{k},\frac{1}{k},\frac{\log k}{k}}\left(\mu\right)<2\,\log\left(\gamma_{1_{k},\frac{\log k}{k}}\left(\frac{1}{k}\right)\right)$
%for $\mu\in[0.15,0.5]$.

${\bf \mu\in[1/2,1[}$:  The second derivative of $-k\log t_{1/k}\left(\mu\right)$
is $-\frac{1}{1-\mu}-\frac{1}{k-1-\mu}-\frac{2}{\mu}$. It can be
checked easily that its third derivative is negative in $[1/2,1]$. Then the first
derivative of $k\log t_{1/k}\left(\mu\right)$ is concave.
$\log G_{1_{k},\frac{1}{k}}\left(\mu\right)$ have also the same properties.

The value
of $\left(-k\log t_{1/k}\left(\mu\right)\right)^{\prime}$ at the point
$\mu=1/2$ is $\log\left(2k-3\right)$. The line joining the points
$\left(1-1/k,0\right)$ to $\left(1/2,\log\left(2k-3\right)\right)$
bounds from below this first derivative. So $\left(k\log t_{1/k}\left(\mu\right)\right)^{\prime}\geq\frac{\log\left(2k-3\right)}{-\frac{1}{2}+\frac{1}{k}}\left(\mu-1+1/k\right)$.

Similarly $\left(\log k\log G_{1_{k},\frac{1}{k}}\left(\mu\right)\right)^{\prime}\geq\frac{\left(2-k\right)\log k}{\left(k-\frac{3}{2}\right)\left(-\frac{1}{2}+\frac{1}{k}\right)}\left(\mu-1+1/k\right)$. 

Summing up $\left(k\,\log\Gamma_{1_{k},\frac{1}{k},\frac{\log k}{k}}\left(\mu\right)\right)^{\prime}\geq\left(\frac{\left(2-k\right)\log k}{\left(k-\frac{3}{2}\right)\left(-\frac{1}{2}+\frac{1}{k}\right)}+\frac{\log\left(2k-3\right)}{-\frac{1}{2}+\frac{1}{k}}\right)\left(\mu-1+1/k\right)\geq0$
for $\mu\in[1/2,1-1/k[$. As a consequence: $\log\Gamma_{1_{k},\frac{1}{k},\frac{\log k}{k}}\left(\mu\right)$ is increasing in this interval and then  $\log\Gamma_{1_{k},\frac{1}{k},\frac{\log k}{k}}\left(\mu\right)<\Gamma_{1_{k},\frac{1}{k},\frac{\log k}{k}}\left(1-1/k\right)=2\,\log\gamma_{1_{k},\frac{\log k}{k}}\left(1/k\right)$
for $\mu\in[1/2,1-1/k[$. 

 We prove in the following that in this interval,
$k\log\Gamma_{1_{k},\frac{1}{k},\frac{\log k}{k}}\left(\mu\right)$
is strictly decreasing. As already seen the first derivative of $-k\log t_{1/k}\left(\mu\right)$
is concave and can be bounded from above by its tangent in $1-1/k$.
Then $\left(k\log t_{1/k}\left(\mu\right)\right)^{\prime}\leq-\frac{k^{3}}{(k-1)^{2}}\left(\mu-1+1/k\right)$.
$\log k\,\log G_{1_{k},\frac{1}{k}}\left(\mu\right)$ have also the
same properties $\left(\log k\,\log G_{1_{k},\frac{1}{k}}\left(\mu\right)\right)^{\prime}\leq\frac{k^{3}\log k}{\left(k-1\right)^{3}}\left(\mu-1+1/k\right)$. 

Summing up $ $$\left(k\,\log\Gamma_{1_{k},\frac{1}{k},\frac{\log k}{k}}\left(\mu\right)\right)^{\prime}\leq\frac{k^{3}}{\left(k-1\right)^{2}}\left(\frac{\log k}{k-1}-1\right)\left(\mu-1+1/k\right)<0$
for $k\geq3$. As a consequence: $\log\Gamma_{1_{k},\frac{1}{k},\frac{\log k}{k}}\left(\mu\right)<\Gamma_{1_{k},\frac{1}{k},\frac{\log k}{k}}\left(1-1/k\right)=2\,\log\left(\gamma_{1_{k},\frac{\log k}{k}}\left(1/k\right)\right)$
for $\mu\in]1-1/k,1]$. 
\end{proof}

\end{appendix}
\end{document}